\theoremstyle{plain}
\newtheorem{thm}{Theorem}
\newtheorem{prop}[thm]{Proposition}
\theoremstyle{definition}
\newtheorem{rul}{Rule}
\theoremstyle{remark}
\definecolor{dark-green}{rgb}{0,0.7,0}
\definecolor{dark-blue}{rgb}{0,0.2,0.5}
\definecolor{med-blue}{rgb}{0,0.7,1}
\definecolor{mblue}{rgb}{0,0.2,1}
\definecolor{cnc}{rgb}{0.8,0,0}
\definecolor{light-red}{rgb}{1,0.8,0.8}
\definecolor{dark-yellow}{rgb}{1,0.8,0}
\definecolor{light-blue}{rgb}{0.8,0.9,1}
\definecolor{verylight-blue}{rgb}{0.93,0.95,1}
\definecolor{light-yellow}{rgb}{1,0.9,0.8}
\definecolor{grey}{gray}{0.88}
\definecolor{new-green}{rgb}{0.5,0.5,0.6}
\def\a{\alpha} \def\b{\beta} \def\g{\gamma} \def\d{\delta}
 \def\ve{\varepsilon} 
 \def\S{{\cal{S}}} \def\A{{\cal{A}}} 
  \def\i{i_1} \def\j{i_2}
\def\k{i_3} \def\l{i_4}
\begin{document}

\title{The constitutive tensor of linear elasticity: its
  decompositions, Cauchy relations, null Lagrangians, and
  wave propagation}
\author{Yakov Itin\\
  Inst. Mathematics, Hebrew Univ. of Jerusalem, \\E.J.\ Safra Campus,
  Givat Ram,
  Jerusalem 91904, Israel,\\
  and Jerusalem College of Technology, 21 Havaad Haleumi, POB 16031,\\
  Jerusalem 91160, Israel,
  email: itin@math.huji.ac.il\\ \\
  Friedrich W.~Hehl\\
  Inst.\ Theor.\ Physics, Univ.\ of Cologne,
  50923 K\"oln, Germany,\\
  and Dept.\ Physics \& Astron., Univ.\ of Missouri,\\ Columbia, MO
  65211, USA, email: hehl@thp.uni-koeln.de}

\date{05 August 2012, {\it file DecompElasticity17.tex}}
\maketitle
\begin{abstract} In linear anisotropic elasticity, the elastic
  properties of a medium are described by the fourth rank elasticity
  tensor $C$. The decomposition of $C$ into a partially symmetric
  tensor $M$ and a partially antisymmetric tensors $N$ is often used
  in the literature. An alternative, less well-known decomposition,
  into the completely symmetric part $S$ of $C$ plus the reminder $A$,
  turns out to be {\it irreducible} under the 3-dimensional general
  linear group. We show that the $S\!A$-decomposition is unique,
  irreducible, and preserves the symmetries of the elasticity
  tensor. The $M\!N$-decomposition fails to have these desirable
  properties and is such inferior from a physical point of
  view. Various applications of the $S\!A$-decomposition are
  discussed: the Cauchy relations (vanishing of $A$), {the
    non-existence of elastic} null Lagrangians, the decomposition of
  the elastic energy and of the acoustic wave propagation. The
  acoustic or Christoffel tensor is split in a Cauchy and a non-Cauchy
  part. The Cauchy part governs the longitudinal wave propagation. We
  provide explicit examples of the effectiveness of the
  $S\!A$-decomposition. A complete class of anisotropic media is
  proposed that allows pure polarizations in arbitrary directions,
  similarly as in an isotropic medium.

\end{abstract}

{\it {Key index words:} {anisotropic} elasticity tensor,
  irreducible decomposition, Cauchy relations, null Lagrangians,
  acoustic tensor}

\pagestyle{myheadings} \markboth{Yakov Itin} {Y.\ Itin and
  F.W.~Hehl\hspace{90pt} The constitutive tensor of linear
  elasticity}

\maketitle
\section{Introduction  {and summary of the results}}

Consider an arbitrary point $P_o$, with coordinates $x^i_o$, in an
undeformed body. Then we deform the body and the same material point
is named $P$, with coordinates $x^i$. The position of $P$ is uniquely
determined by its initial position $P_o$, that is,
$x^i=x^i\left(x^j_o\right)$.  The {\it displacement} vector
$u=u^i \partial_i$ is then defined as
\begin{equation}\label{strain1}
u^i\left(x^j_o\right)=x^i\left(x^j_o\right)-x^i_o\qquad (i,j,\dots=1,2,3)\,.
\end{equation}
We distinguish between lower (covariant) and upper
  (contravariant) indices in order to have the freedom to change to 
arbitrary coordinates if necessary.

\subsection{Strain and stress}

The deformation and the stress state of an elastic body is, within
{\it linear} elasticity theory, described by means of the {strain}
tensor $\varepsilon_{ij}$ and the {stress} tensor $\sigma^{ij}$. The
strain tensor as well as the stress tensor are both symmetric, that
is, $\varepsilon_{[ij]}:=\frac 12 (\varepsilon_{ij}
-\varepsilon_{ji})=0$ and $\sigma^{[ij]}=0$, see Love
(1927)\cite{Love}, Landau \& Lifshitz (1986)\cite{landau}, Hauss\"uhl
(2007)\cite{Haus}, Marsden \& Hughes (1983)\cite{Marsden}, and
Podio-Guidugli (2000)\cite{Podio-Guidugli}; thus, $\varepsilon_{ij}$
and $\sigma^{ij}$ both have 6 independent components.

The strain tensor can be expressed in terms of the displacement vector
via 
\begin{equation}\label{strain}
\varepsilon_{ij}=g_{ki}\partial_j u^k+g_{kj}\partial_i u^k=:2
g_{k(i}\partial_{j)} u^k\,,
\end{equation}
whereupon $g_{ij}$ is the metric of the three-dimensional (3d)
Euclidean background space and $\partial_i:=\partial /\partial
x^i_o$. In Cartesian coordinates, we have
$\varepsilon_{ij}=2\partial_{(i}u_{j)}$, with $u_i:=g_{ik}u^k$.  

The stress tensor fulfills the momentum law
\begin{equation}\label{momentum}
 \partial_j\sigma^{ij}+\rho b^i=\rho \ddot{u}^i\,;
\end{equation}
here $\rho$ is the mass density, $b^i$ the body force density, and a
dot denotes the time derivative.

\subsection{Elasticity tensor}

The constitutive law in linear elasticity for a homogeneous
anisotropic body, the generalized Hooke law, postulates a linear
relation between the two second-rank tensor fields, the stress
$\sigma^{ij}$ and the strain $\varepsilon_{kl}$:
\begin{equation}\label{hooke}
\sigma^{ij}=C^{ijkl}\,\varepsilon_{kl}\,.
\end{equation}
The {elasticity tensor} $C^{ijkl}$ has the physical dimension
of a stress, namely {\it force/area.} Hence, in the International
System of Units (SI), the frame components of $C^{ijkl}$ are measured in
{\it pascal} P, with P:=N/m$^2$.

In 3d, a generic fourth-order tensor has 81 independent components. It
can be viewed as a generic $9\times 9$ matrix.  Since
$\varepsilon_{ij}$ and $\sigma_{ij}$ are symmetric, certain symmetry
relations hold also for the elasticity tensor. Thus,
\begin{equation}\label{s1-sym}
  \varepsilon_{[kl]}=0\,\qquad \Longrightarrow\qquad C^{ij[kl]}=0\,.
\end{equation}
This is the so-called {\it right minor symmetry}.  Similarly and
independently, the symmetry of the stress tensor yields the so-called
{\it left minor symmetry},
\begin{equation}\label{s2-sym}
\sigma^{[ij]}=0\,\qquad \Longrightarrow\qquad C^{[ij]kl}=0\,.
\end{equation}
Both minor symmetries, (\ref{s1-sym}) and (\ref{s2-sym}), are assumed
to hold simultaneously. Accordingly, the tensor $ C^{ijkl}$ can be
represented by a $6\times 6$ matrix with 36 independent components.

The energy density of a deformed material is expressed as
$W={\scriptstyle\frac 12 }\,\sigma^{ij}\varepsilon_{ij}$. When the
Hooke law is substituted, this expression takes the form
\begin{equation}\label{energy1a}
W=\frac 12 \,C^{ijkl}\varepsilon_{ij}\varepsilon_{kl}\,.
\end{equation}
The right-hand side of (\ref{energy1a}) involves only those
combinations of the elasticity tensor components which are symmetric
under permutation of the first and the last pairs of indices.  In
order to prevent the corresponding redundancy in the components of
$C^{ijkl}$, the so-called {\it major symmetry},
\begin{equation}\label{paircom}
  C^{ijkl}- C^{klij}=0\,
\end{equation}
is assumed. Therefore, the $6\times 6$ matrix becomes symmetric and
only 21 independent components of $C^{ijkl}$ are left
over.

As a side remark we mention that the components of a
  tensor are always measured with respect to a local frame
$e_\a=e^i{}_\a \partial_i$; here $\a=1,2,3$ numbers the three linearly
independent legs of this frame, the triad. Dual to this frame is the
coframe $\vartheta^\b=e_j{}^\b dx^j$, see Schouten
(1989)\cite{SchoutenPhysicist} and Post (1997)\cite{Post}. For the
elasticity tensor, the components with respect to such a local coframe
are $C^{\a\b\g\d}:=e_i{}^\a e_j{}^\b e_k{}^\g e_l{}^\d C^{ijkl}$. They
are called the physical components of $C$. For simplicity,
  we will not set up a frame formalism since it doesn't provide
  additional insight in the decomposition of the elasticity tensor,
  that is, we will use coordinate frames $\partial_i$ in the rest of
  our article.

Because of (\ref{s1-sym}), (\ref{s2-sym}), and
  (\ref{paircom}), we have the following \medskip

  \noindent{\bf Definition:} A {\it 4th rank tensor} of type
  $\binom{4}{0}$ qualifies to describe anisotropic elasticity if

 (i) its physical components carry the dimension of
force/area (in SI pascal),

 (ii) it obeys the left and right minor symmetries,

 (iii) and it obeys the major symmetry.

 \noindent It is then called elasticity tensor (or elasticity or
 stiffness) and, in general, denoted by $C^{ijkl}$. \medskip

 In Secs.~2.1 and 2.2 we translate our notation into that of Voigt,
 see Voigt (1928)\cite{Voigt} and Love (1927)\cite{Love}, and discuss
 the corresponding 21-dimensional vector space of all elasticity
 tensors, {see also Del Piero (1979)\cite{Piero}.

   Incidentally, in linear electrodynamics, see Post
   (1962)\cite{Post}, Hehl \& Obukhov (2003)\cite{Birkbook}, and Itin
   (2009)\cite{Itin:2009aa}, we have a 4-dimensional constitutive
   tensor $\chi^{\mu\nu\sigma\kappa}= -\chi^{\nu\mu\sigma\kappa}=
   -\chi^{\mu\nu\kappa\sigma} =\chi^{\sigma\kappa\mu\nu}$, with
   $\mu,\nu,...=0,1,2,3$. Surprisingly, this tensor corresponds also
   to a $6\times 6$ matrix, like $C^{ijkl}$ in elasticity. The major
   symmetry is the same, the minor symmetries are those of an
   antisymmetric pair of indices. }

\subsection{Decompositions of the 21 components elasticity tensor}

In Sec.~2.3, we turn first to an algebraic decomposition of $C^{ijkl}$
that is frequently discussed in the literature: the elasticity tensor
is decomposed into the sum of the two tensors $ M^{ijkl} :=
C^{i(jk)l}$ and $N^{ijkl} := C^{i[jk]l}$, which are symmetric or
antisymmetric in the middle pair of indices, respectively. We show
that $M$ and $N$ fulfill the major symmetry but {\it not} the minor
symmetries and that they can be further decomposed.  Accordingly, this
reducible decomposition does not correspond to a direct sum
decomposition of the vector space defined by $C$, as we will show in
detail. {Furthermore we show that the vector space of $M$
  is 21-dimensional and that of $N$ 6-dimensional.}

Often, in calculation within linear elasticity, the tensors $M$ and
$N$ emerge.  They are auxiliary quantities, but due to the lack of the
minor symmetries, they are {\it not} elasticities. Consequently, they
cannot be used to characterize a certain material elastically.  These
quantities are placeholders that are not suitable for a direct
physical interpretation.

Subsequently, in Sec.~2.4, we study the behavior of the physical
components of $C$ under the action of the general linear 3d real group
$GL(3,\mathbb R)$. The $GL(3,\mathbb R)$ commutes with the
permutations of tensor indices. This fact yields the well-known
relation between the action of $GL(3,\mathbb R)$ and the action of the
symmetry group $S_p$. Without restricting the generality of our
considerations, we will choose local coordinate frames $\partial_i$
for our considerations.
 
\vspace{20pt}
\begin{center}
\includegraphics[width=7truecm]{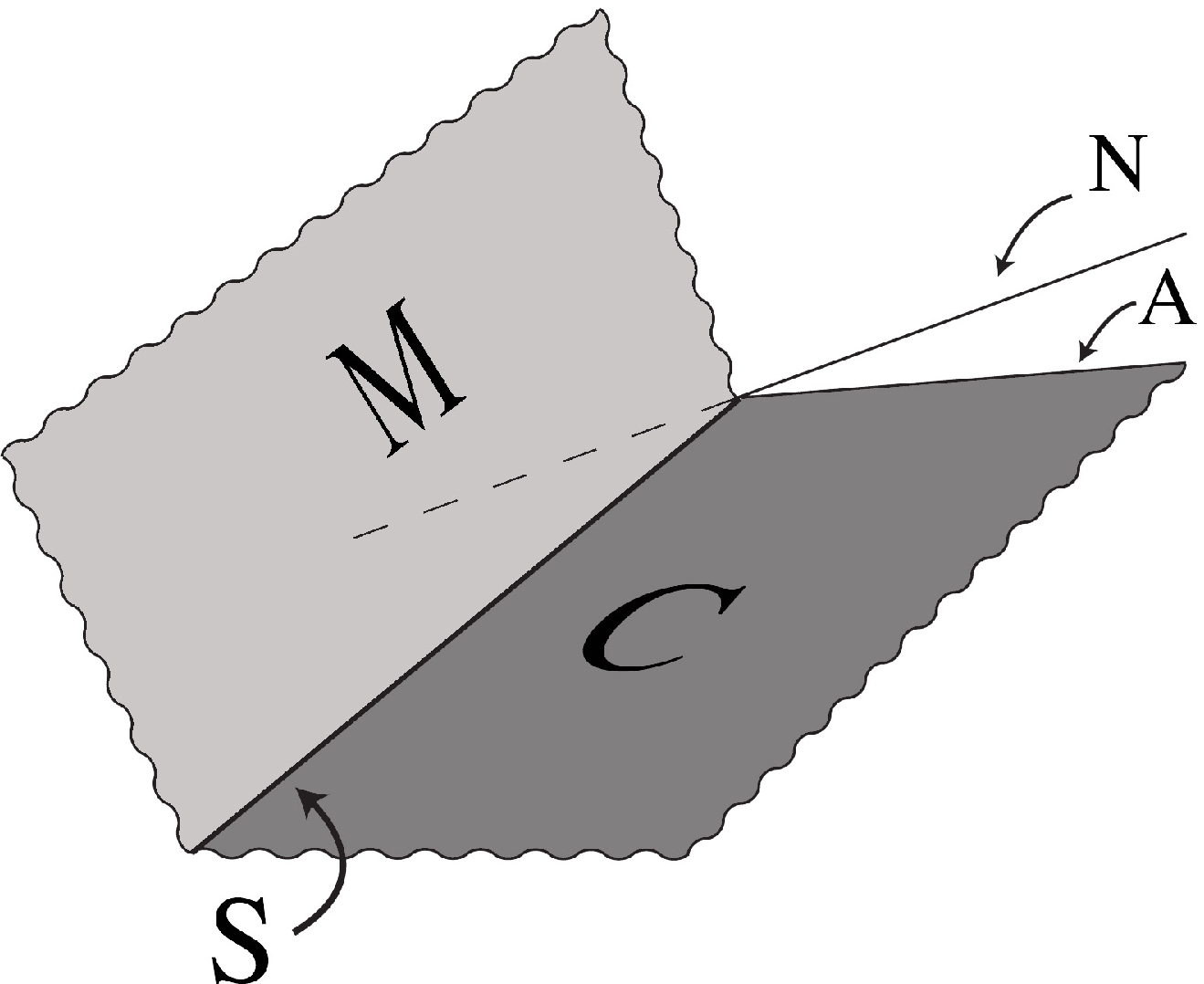}
\end{center}
  \noindent {\bf Figure 1.}
  {\it A tensor $T^{ijkl}$ of rank 4 in 3-dimensional (3d) space has
  $3^4=81$ independent components. The 3 dimensions of our image
  represent this 81d space. The plane $\mathbb C$ depicts the 21
  dimensional subspace of all possible elasticity (or stiffness)
  tensors. This space is span by its irreducible pieces, the 15d space
  of the totally symmetric elasticity $\mathbf S$ (a straight line)
  and the 6d space of the difference $A={\mathbb C}-{\mathbf S}$ (also
  depicted, I am sad to say, as a straight line).--- Oblique to
  $\mathbb C$ is the 21d space $\mathbb M$ of the reducible $M$-tensor
  and the 6d space of the reducible $N$-tensor. The $\mathbb C$
  ``plane'' is the only place where elasticities (stiffnesses) are at
  home. The spaces $\mathbb M$ and $N$ represent only elasticities,
  provided the Cauchy relations are fulfilled. Then, $A=N=0$ and
  $\mathbb M$ and $\mathbb C$ cut in the 15d space of $\mathbf
  S$. Notice that the spaces M and C are intersecting exactly on S.}
\bigskip

In this way, we arrive at an alternative decomposition of $C$ into two
pieces $S$ and $A$, which is {\it irreducible} under the action of the
$GL(3,\mathbb R)$.  The known device to study the action of $S_p$ is
provided by Young's tableaux technique. For the sake of completeness,
we will be briefly describe Young's technique in the Appendix. In an
earlier paper, see Hehl \& Itin (2002)\cite{Cauchy}, we discussed this
problem already, but here we will present rigorous proofs of all
aspects of this irreducible decomposition. {It turns our
  that the space of the $S$-tensor is 15-dimensional and that of the
  $A$-tensor 6-dimensional.}

In Sec.~2.5, we compare the reducible $M\!N$-decomposition {of
  Sec.~2.3} with the irreducible $S\!A$-decom\-posi\-tion {of
  Sec.~2.4} and will show that the latter one is definitely to be
preferred from a physical point of view. The formulas for the
transition between the $M\!N$- and the $S\!A$-decom\-posi\-tion are
collected in the Propositions 9 and 11. The irreducible 4th rank
tensor $A^{ijkl}$ can alternatively be represented by a symmetric 2nd
rank tensor $\Delta_{ij}$ (Proposition 10). We visualized our main
results with respect to the reducible and the irreducible
decompositions of the elasticity tensor in Figure 1, for details,
please see Sec.2.5.

{ The group $GL(3,\mathbb R)$, which we are using here,
  provides the basic, somewhat coarse-grained decomposition of the
  elasticity tensor. For finer types of irreducible decompositions
  under the {\it orthogonal} subgroup $SO(3)$ of $GL(3,\mathbb R)$,
  see, for example, Rychlewski (1984)\cite{Rychlewski}, Walpole
  (1984)\cite{Walpole}, Surrel (1993)\cite{Surrel}, Xiao
  (1998)\cite{Xiao}, and the related work of Backus
  (1970)\cite{Backus} and Baerheim (1993)\cite{Baerheim}. As a result
  of our presentation, the great number of invariants, which emerge in
  the latter case, can be organized into two subsets: one related to
  the $S$ piece and the other one to the $A$ piece if $C$.  }

\subsection{Physical applications and examples}

Having now the irreducible decomposition of $C$ at our command, Sec.~3
will be devoted to the physical applications. In Sec.3.1, we discuss
the Cauchy relations {and show that they correspond to the
  vanishing of one irreducible piece, namely to $A=0$ or,
  equivalently, to $\Delta=0$. As a consequence, the totally symmetric
  piece $S$ can be called the Cauchy part of the elasticity tensor
  $C$, whereas the $A$ piece measures the deviation from this Cauchy
  part; it is the non-Cauchy part of $C$. The reducible pieces $M$ and
  $N$ defy such an interpretation and are not useful for applications
  in physics.  In this sense, we can speak of two kinds of elasticity,
  a Cauchy type and a non-Cauchy type.

  In Sec.~3.2, this picture is brought to the elastic energy and the
  latter} decomposed in a Cauchy part and its excess, the ``non-Cauchy
part''. In other words, this distinction between two kinds of
elasticity is also reflected in the properties of the elastic energy.

A null Lagrangians is such an Lagrangian whose Euler-Lagrange
expression vanishes, see Crampin \& Saunders (2005)\cite{Crampin}; we
also speak of a ``pure divergence''.  In Sec.~3.3, elastic null
Lagrangians are addressed, and we critically evaluate the
literature. We show---in contrast to a seemingly widely held
view---that for an arbitrary anisotropic medium there doesn't exist an
elastic null Lagrangian. The expressions offered in the corresponding
literature are worthless as null Lagrangians, since they still depend
on {\it all} components of the elasticity tensor. We collect these
results in Proposition 12.

In Sec.~3.4, we define for acoustic wave propagation the Cauchy and
non-Cauchy parts of the Christoffel (or acoustic) tensor
$\Gamma^{ij}=C^{klj}n_kn_l/\rho$ ($n_k=$ unit wave covector, $\rho=$
mass density). We find some interesting and novel results for the
Christoffel tensor{(, see Propositions 13 and 14)}.  In Sec.~3.5, we
investigate the polarizations of the elastic wave. We show that the
longitudinal wave propagation is completely determined by the Cauchy
part of the Christoffel tensor, see Proposition 15. In Proposition 16
a new result is presented on the propagation of purely polarized
waves; {we were led to these investigations by following up
  some ideas about the interrelationship of the symmetry of the
  elasticity tensor and the Christoffel tensor in the papers of
  Alshits and Lothe (2004)\cite{Alshits} and B\'ona et al. (2004, 2007,
  2010)\cite{Bona2,Bona1,Bona}.}

In Sec.~4, we investigate examples, namely isotropic media (Sec.~4.1)
and media with cubic symmetry (Sec.~4.2). Modern technology allows
modeling composite materials with their effective elastic properties,
{see Tadmore \& Miller (2011)\cite{Tadmore}.} Irreducible
  decomposition of the elasticity tensor can be used as a guiding
  framework for prediction of certain features of these materials. As
  an example, we {presented in Proposition 17} a complete new class of
  anisotropic materials that allow pure polarizations {to propagate}
  in arbitrary directions, similarly as in isotropic materials.

\subsection{Notation}

We use here tensor analysis in 3d Euclidean space with explicit index
notation, see Sokolnikoff (1951)\cite{Sokol} and Schouten
(1954, 1989)\cite{SchoutenRicci,SchoutenPhysicist}. Coordinate
  (holonomic) indices are denoted by Latin letters $i,j,k,\dots$; they
  run over $1,2,3$. Since we allow arbitrary curvilinear coordinates,
  covariant and contravariant indices are used, that is, those in
  lower and in upper position, respectively, see Schouten
  (1989)\cite{SchoutenPhysicist}. Summation over repeated indices is
  understood. We abbreviate symmetrization and antisymmetrization over
  $p$ indices as follows:
\begin{eqnarray*}\label{symm}
  (i_1\,i_2\,\dots i_p)&:=&\frac{1}{p!}\{\sum
  \text{ over all permutations of }i_1\,i_2\,\dots i_p \}\,,\\
  \left[i_1\,i_2\,\dots i_p\right]&:=&\frac{1}{p!}\{\sum
  \text{over all even perms.} - \sum \text{over all odd perms.}\}\,.
\end{eqnarray*}
The Levi-Civita symbol is given by $\epsilon_{ijk}=+1,-1,0$, for even,
odd, and no permutation of the indices $123$, respectively; the
analogous is valid for $\epsilon^{ijk}$. The metric
$ds^2=g_{ij}dx^i\otimes dx^j$ has Euclidean signature. We can raise
and lower indices with the help of the metric. In linear elasticity
theory, tensor analysis is used, for example, in Love
(1927)\cite{Love}, Sokolnikoff (1956)\cite{SokolElast}, Landau \&
Lifshitz (1986)\cite{landau}, and Hauss\"uhl (2007)\cite{Haus}. For a
modern presentation of tensors as linear maps between corresponding
vector spaces, see Marsden \& Hughes (1983)\cite{Marsden},
Podio-Guidugli (2000)\cite{Podio-Guidugli}, {and Hetnarski
  \& Ignaczak (2011)\cite{Hetnarski}.}

In Hehl \& Itin (2002)\cite{Cauchy}, we denoted the elasticity moduli
differently. The quantities of our old paper translate as follows
into the present one: $^{(1)}{c}^{ijkl}\equiv S^{ijkl}$,
$^{(2)}{c}^{ijkl}\equiv A^{ijkl}$, $^{(2)}\hat{c}^{ijkl}\equiv
N^{ijkl}$, and ${c}^{i(jk)l}\equiv M^{ijkl}$.

\section{Algebra of the decompositions of the elasticity tensor}

\subsection{Elasticity tensor in Voigt's notation}

The standard ``shorthand'' notation of $C^{ijkl}$ is due to Voigt, see
Voigt (1928)\cite{Voigt} and Love (1927)\cite{Love}.  One identifies a
symmetric pair $\{ij\}$ of 3d indices with a multi-index $I$ that has
the range from $1$ to $6$:
\begin{equation}\label{voigt1} 11\to 1\,, \,22\to
2\,,\, 33\to 3\,,\, 23\to 4\,,\, 31\to 5\,,\, 12\to 6\,.
\end{equation}
Then the elasticity tensor is expressed as a symmetric $6\times 6$
matrix $C^{IJ}$.  Voigt's notation is only applicable since the minor
symmetries (\ref{s1-sym}) and (\ref{s2-sym}) are valid. Due to the
major symmetry, this matrix is symmetric, $C^{[IJ]}=0$. Explicitly,
we have
\begin{equation}\label{voigt2}
\begin{bmatrix}
C^{1111} & C^{1122} & C^{1133} & C^{1123} & C^{1131} & C^{1112} \\
* & C^{2222} & C^{2233} & C^{2223} & C^{2231} & C^{2212} \\
* & * & C^{3333} & C^{3323} & C^{3331} & C^{3312} \\
* & * & * & C^{2323} & C^{2331} & C^{2312} \\
* & * & * & * & C^{3131} & C^{3112} \\
* & * & * & * & * & C^{1212}
     \end{bmatrix} \equiv \begin{bmatrix}
  C^{11} & C^{12} & C^{13} & C^{14} & C^{15} & C^{16} \\
* & C^{22} & C^{23} & C^{24} & C^{25} & C^{26} \\
* & * & C^{33} & C^{34} & C^{35} & C^{36} \\
* & * & * & C^{44} & C^{45} & C^{46} \\
* & * & * & * & C^{55} & C^{56} \\
* & * & * & * & * & C^{66}\, \end{bmatrix}.
\end{equation}
For general anisotropic materials, all the displayed components are
nonzero and independent of one another. The stars in both matrices
denote those entries that are dependent due to the symmetries of the
matrices.

\subsection{Vector space of the elasticity tensor}

The set of all generic elasticity tensors, that is, of all fourth rank
tensors with the minor and major symmetries, builds up a vector
space. Indeed, a linear combination of two such tensors, taken with
arbitrary real coefficients, is again a tensor with the same
symmetries. We denote this vector space by ${\mathcal C}$.
\begin{prop} For the space of elasticity tensors,
\begin{equation}
  {\rm dim}\,{\mathcal C}=21\,.
\end{equation}
\end{prop}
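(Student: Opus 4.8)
The plan is to turn the counting problem into a statement about a well-understood matrix space: I would exhibit a linear isomorphism between $\mathcal C$ and the space of symmetric $6\times 6$ matrices, compute the dimension of the latter directly, and then use the fact that a linear isomorphism preserves dimension. This is exactly the content that Voigt's notation in (\ref{voigt1}) and the matrix (\ref{voigt2}) is designed to encode, so the proof essentially amounts to reading off that encoding carefully.

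First I would recall that, by the two minor symmetries (\ref{s1-sym}) and (\ref{s2-sym}), each of the index pairs $\{ij\}$ and $\{kl\}$ is symmetric and hence ranges only over the $\tfrac12\cdot 3\cdot 4=6$ unordered values. Using Voigt's substitution (\ref{voigt1}), I would replace the two pairs by the multi-indices $I,J\in\{1,\dots,6\}$, so that $C^{ijkl}$ is captured by the $6\times 6$ array $C^{IJ}$ of (\ref{voigt2}). The major symmetry (\ref{paircom}) then translates into $C^{IJ}=C^{JI}$; that is, $C^{IJ}$ is a symmetric matrix.

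The one point that needs a little care --- the mild ``hard part'' of an otherwise elementary argument --- is to verify that the assignment $C^{ijkl}\mapsto C^{IJ}$ really is a linear isomorphism of $\mathcal C$ onto the space of symmetric $6\times 6$ matrices, with no loss or double counting of components. Linearity is immediate. Injectivity holds because under (\ref{voigt1}) every component $C^{ijkl}$ equals some entry $C^{IJ}$, so $C^{IJ}=0$ forces $C^{ijkl}=0$. For surjectivity I would lift an arbitrary symmetric matrix $C^{IJ}$ back through (\ref{voigt1}) to a fourth-rank tensor; by construction this tensor obeys both minor symmetries (these are precisely what make the multi-index labeling unambiguous), and since $C^{IJ}=C^{JI}$ it obeys the major symmetry as well, so it lies in $\mathcal C$. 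Thus the three defining symmetries correspond exactly to ``symmetric index pairs plus symmetric matrix,'' and the correspondence is a bijection.

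It then remains to count the free parameters of a symmetric $6\times 6$ matrix: the $6$ diagonal entries together with the $\binom{6}{2}=15$ entries strictly above the diagonal give $6+15=\tfrac{6\cdot 7}{2}=21$, which are precisely the non-starred components displayed in (\ref{voigt2}). Hence $\dim\mathcal C=21$. As a consistency check I would also track the reduction directly from the generic tensor: the $3^4=81$ components drop to $9\times 6=54$ under the right minor symmetry, then to $6\times 6=36$ under the left minor symmetry, and finally to $21$ under the major symmetry, in agreement with the isomorphism count.
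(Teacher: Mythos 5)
Your proof is correct and follows essentially the same route as the paper: the paper's proof also works through Voigt's encoding, constructing the explicit basis $\{E_1,\dots,E_{21}\}$ of symmetrized tensor products indexed by the entries $C^{IJ}$, which is precisely the inverse of the isomorphism onto symmetric $6\times 6$ matrices that you set up. Your version just phrases the bijection abstractly and counts $\binom{6}{2}+6=21$ instead of listing the basis vectors, so the two arguments are interchangeable.
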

\begin{proof} The basis of ${\mathcal C}$ can be enumerated by the
  elements of the matrix $C^{IJ}$. For instance, the element
  $C^{11}=C^{1111}$ is related to the basis vector
\begin{equation}\label{bas1}
E_1=\partial_1\otimes\partial_1\otimes\partial_1\otimes\partial_1\,;
\end{equation}
the element $C^{12}=C^{1122}$ corresponds to the basis vector
\begin{equation}\label{bas2}
E_2=\frac12\left(\partial_1\otimes\partial_1\otimes\partial_2\otimes\partial_2+
\partial_2\otimes\partial_2\otimes\partial_1\otimes\partial_1\right)\,;
\end{equation}
the element $C^{13}=C^{1133}$ corresponds to the basis vector
\begin{equation}\label{bas2*}
  E_3=\frac12\left(\partial_1\otimes\partial_1\otimes\partial_3\otimes\partial_3+
\partial_3\otimes\partial_3\otimes\partial_1\otimes\partial_1\right)\,.
\end{equation}
To the element $C^{14}=C^{1123}$, we relate the basis vector
\begin{equation}\label{bas4}
E_4=\frac14\left(\partial_1\otimes\partial_1\otimes\partial_2\otimes\partial_3+
\partial_1\otimes\partial_1\otimes\partial_3\otimes\partial_2+
\partial_2\otimes\partial_3\otimes\partial_1\otimes\partial_1+
\partial_3\otimes\partial_2\otimes\partial_1\otimes\partial_1\right)\,.
\end{equation}
In this way, a set of vectors $\{E_1,\cdots,E_{21}\}$ is
constructed. Since there are no relations between the 21 components
$C^{IJ}$, all these vectors are linearly independent. Moreover every
elasticity tensor can be expanded as a linear combination of
$\{E_1,\cdots,E_{21}\}$. Thus a basis of of the vector space
${\mathcal C}$ consists of 21 vectors.
\end{proof}

\subsection{Reducible decomposition of $C^{ijkl}$}

\subsubsection{Definitions of  $M$ and $N$ and their symmetries}

In the literature on elasticity, a special decomposition of $C^{ijkl}$
into two tensorial parts is frequently used, see, for example, Cowin
(1989)\cite{Cowin1989}, Campanella \& Tonton (1994)\cite{Campanella},
Podio-Guidugli (2000)\cite{Podio-Guidugli}, Weiner
(2002)\cite{Weiner}, and Hauss\"uhl (2007)\cite{Haus}. It is obtained
by symmetrization and antisymmetrization of the elasticity tensor with
respect to its {\it two middle indices:}
\begin{eqnarray}\label{MNdef}
M^{ijkl}:=C^{i(jk)l}\,,\quad
  N^{ijkl}:=C^{i[jk]l}\,,\qquad\text{with}\qquad
C^{ijkl}=M^{ijkl}+ N^{ijkl}\,.
\end{eqnarray}
Sometimes the same operations are applied for the second and the
fourth indices. Due to the symmetries of the elasticity tensor, these
two procedures are equivalent to one another.

We recall that the elasticity tensor fulfills the left and right
minor symmetries and the major symmetry:
\begin{eqnarray}\label{minor}
  C^{[ij]kl}=0\,,\qquad C^{ij[kl]}=0\,;\qquad C^{ijkl}-C^{klij}=0\,.
\end{eqnarray}
\begin{prop}
The major symmetry holds for both tensors  $M^{ijkl}$ and $N^{ijkl}$.
\end{prop}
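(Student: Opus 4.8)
The plan is to show directly that both $M^{ijkl}$ and $N^{ijkl}$ are invariant under the interchange of the first index pair with the last index pair, i.e. $M^{ijkl}=M^{klij}$ and $N^{ijkl}=N^{klij}$, using only the definitions \eqref{MNdef} together with the major symmetry $C^{ijkl}=C^{klij}$ already assumed for $C$. The key observation is that the defining operations for $M$ and $N$ are symmetrization and antisymmetrization over the \emph{middle} two indices $j,k$, and that this middle-pair operation interacts cleanly with the pair-swap $(ij)\leftrightarrow(kl)$.

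First I would write out the major symmetry of $C$ in the index form needed, namely $C^{ijkl}=C^{klij}$, and note that it holds for every choice of indices, in particular after any relabelling. Then I would verify the claim for $M$ by a short substitution: starting from $M^{ijkl}=\tfrac12\left(C^{ijkl}+C^{ikjl}\right)$, I would form $M^{klij}=\tfrac12\left(C^{klij}+C^{kjli}\right)$ and apply the major symmetry of $C$ termwise to each of the two summands, matching $C^{klij}$ with $C^{ijkl}$ and $C^{kjli}$ with $C^{ikjl}$ (the latter after reading off $C^{kjli}=C^{likj}$ and relabelling). The two expressions then coincide, giving $M^{ijkl}=M^{klij}$. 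The identical computation, with the relative sign in the antisymmetrizer, establishes $N^{ijkl}=N^{klij}$; the sign is common to both terms and so does not obstruct the equality.

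I do not expect any serious obstacle here — the statement is essentially the observation that symmetrizing or antisymmetrizing over the middle indices commutes with the pair-exchange operation. The only point requiring a little care is the bookkeeping of which permuted component of $C$ corresponds to which under the major symmetry, since the swap $(ij)\leftrightarrow(kl)$ sends the middle pair $\{j,k\}$ to the pair $\{l,i\}$ rather than fixing it; one must check that the two terms produced by the middle-index (anti)symmetrizer are merely exchanged, not altered, by the pair-swap. Writing the four relevant components of $C$ explicitly and pairing them off via \eqref{paircom} makes this transparent, so the whole argument reduces to two lines of index manipulation for each of $M$ and $N$. Notably, the proof uses \emph{only} the major symmetry and not the minor symmetries, which is consistent with the paper's later emphasis that $M$ and $N$ retain the major symmetry while failing the minor ones.
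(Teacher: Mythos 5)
Your strategy is the paper's own (Proposition 2): expand both sides via (\ref{MNdef}) and cancel termwise using the symmetries of $C$. However, the execution has two genuine flaws. First, your expansion of $M^{klij}$ is wrong. The definition $M^{abcd}=C^{a(bc)d}$ applied to the index string $(k,l,i,j)$ symmetrizes over the slots occupied by $l$ and $i$, giving $M^{klij}=\tfrac12\bigl(C^{klij}+C^{kilj}\bigr)$, not $\tfrac12\bigl(C^{klij}+C^{kjli}\bigr)$. Your term $C^{kjli}$ carries the index pairing $\{j,k\},\{i,l\}$, while $C^{ikjl}$ carries the pairing $\{i,k\},\{j,l\}$; no symmetry of a general elasticity tensor connects these two index orbits (equating them is a Cauchy-type relation, cf.\ (\ref{cauchy1a}), not an identity), and ``relabelling'' is not available here, because all four indices are free and must match on both sides of the identity you are proving.

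Second---and this persists even after the slip is repaired---your closing claim that the proof uses \emph{only} the major symmetry is false. After the major symmetry cancels $C^{ijkl}$ against $C^{klij}$, one is left with $M^{ijkl}-M^{klij}=\tfrac12\bigl(C^{ikjl}-C^{kilj}\bigr)$, and the permutation relating $(i,k,j,l)$ to $(k,i,l,j)$ is the position permutation $(12)(34)$, which does not lie in the two-element group generated by the pair swap $(13)(24)$. Indeed, conjugating the middle transposition $(23)$ by the pair swap yields the \emph{outer} transposition $(14)$: the pair swap turns the middle-pair (anti)symmetrizer into an outer-pair one, so your ``key observation'' that the two operations interact cleanly is exactly what fails. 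The missing cancellation is supplied by the minor symmetries, $C^{kilj}=C^{iklj}=C^{ikjl}$. That they are indispensable is shown by a counterexample with major symmetry only: let $C^{1213}=C^{1312}=1$ and all other components vanish; then $M^{1123}=\tfrac12$ while $M^{2311}=0$, and likewise $N^{1123}=-\tfrac12$ while $N^{2311}=0$, so neither $M$ nor $N$ is major-symmetric. The paper's computation in (\ref{MNsymmaj})--(\ref{MNsymmaj*}) is the same four-term expansion, but it tacitly invokes the minor symmetries of $C$, recalled in (\ref{minor}), for precisely this second cancellation; your proof must do the same.
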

\begin{proof}
  We formulate the left-hand side of the major symmetries and
  substitute the definitions given in (\ref{MNdef}):
\begin{eqnarray}\label{MNsymmaj}
\hspace{-10pt}M^{ijkl}-M^{klij}\!&\!=\!&\!C^{i(jk)l}-C^{k(li)j}=\frac 12\left(C^{ijkl}+
  C^{ikjl}-C^{klij}-C^{kilj}\right)=0\,,\\ \label{MNsymmaj*}
N^{ijkl}-N^{klij}\!&\!=\!&\!C^{i[jk]l}-C^{k[li]j}=\frac 12\left(C^{ijkl}-
  C^{ikjl}-C^{klij}+C^{kilj}\right)=0\,.
\end{eqnarray}
\end{proof}
\begin{prop}
  In general, the minor symmetries do not hold for the tensors
  $M^{ijkl}$ and $N^{ijkl}$.
\end{prop}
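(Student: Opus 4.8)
The plan is to prove the negative statement by a direct computation of the minor antisymmetrizations of $N$ (equivalently $M$), followed by the exhibition of an explicit witness showing they do not vanish. The organizing observation is that, since $C^{ijkl}=M^{ijkl}+N^{ijkl}$ and $C$ obeys both minor symmetries (\ref{minor}), the antisymmetric minor parts of $M$ and $N$ are negatives of one another: $M^{[ij]kl}=-N^{[ij]kl}$ and $M^{ij[kl]}=-N^{ij[kl]}$. Hence it suffices to show that the minor symmetries fail for $N$; the corresponding failure for $M$ then follows automatically, with the opposite sign.

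Next I would compute $N^{[ij]kl}$ by inserting the definition $N^{ijkl}=C^{i[jk]l}=\frac12(C^{ijkl}-C^{ikjl})$ and antisymmetrizing over the first index pair. Expanding $N^{ijkl}-N^{jikl}$ and using only the left minor symmetry $C^{jikl}=C^{ijkl}$ of $C$, the extraneous terms cancel and one is left with
\[
  N^{[ij]kl}=\frac14\left(C^{jkil}-C^{ikjl}\right)\,,
\]
while an entirely analogous computation, using the right minor symmetry, yields $N^{ij[kl]}=\frac14\left(C^{iljk}-C^{ikjl}\right)$. The one point requiring care here is index bookkeeping: in a term such as $C^{jkil}$ the indices $(j,k)$ occupy the first two slots but are \emph{not} a minor pair of $C$, so one must resist symmetrizing them and apply only those symmetries in (\ref{minor}) that $C$ genuinely possesses.

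Finally I would exhibit an explicit witness showing the right-hand sides are not identically zero. Specializing to $i=k=1$ and $j=l=2$ gives $N^{[12]12}=\frac14(C^{2112}-C^{1122})=\frac14(C^{1212}-C^{1122})$, after a single further use of the left minor symmetry. In Voigt's notation (\ref{voigt1})--(\ref{voigt2}) this equals $\frac14(C^{66}-C^{12})$, a difference of two independent entries of the symmetric $6\times6$ matrix, which is nonzero for a generic anisotropic material. Therefore $N^{[ij]kl}\neq0$ in general, and with it $M^{[ij]kl}=-N^{[ij]kl}\neq0$, so neither $M$ nor $N$ respects the minor symmetries. I do not expect any genuine obstacle beyond the index bookkeeping just flagged; the real content is simply that the obstruction $C^{1212}-C^{1122}$ (more invariantly, the part of $C$ that is not totally symmetric) is generically present, so the claimed failure is the generic situation rather than an exception.
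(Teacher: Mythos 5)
Your proof is correct and takes essentially the same route as the paper: your identity $N^{[ij]kl}=\frac14\left(C^{jkil}-C^{ikjl}\right)$ is, via the left minor symmetry of $C$, exactly the paper's $N^{[ij]kl}=-\frac12\,C^{k[ij]l}=-\frac12 N^{kijl}$, and your relation $M^{[ij]kl}=-N^{[ij]kl}$ is precisely the paper's conclusion. The differences are cosmetic: where the paper merely asserts that these expressions do not vanish in general (the component check being implicit in its later list $N^{1122}=\frac12\left(C^{12}-C^{66}\right)$, etc.), you exhibit the explicit witness $N^{[12]12}=\frac14\left(C^{66}-C^{12}\right)$, and where the paper disposes of the right minor symmetries via the major symmetry ($M^{ij[kl]}=M^{[kl]ij}$), you redo the analogous direct computation.
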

\begin{proof}
  We formulate the left minor symmetries for $M$ and $N$ and use again
  the definitions from (\ref{MNdef}):
\begin{eqnarray}\label{MNsym}
  M^{[ij]kl}&=&{\frac 12}\left(C^{[ij]kl}+C^{[i|k|j]l}\right)=\frac 12
  C^{k[ij]l}=\frac 12 N^{kijl}\,,\\ 
   N^{[ij]kl}&=&{\frac 12}\left(C^{[ij]kl}-C^{[i|k|j]l} \right)
=-{\frac 12}C^{k[ij]l}=-{\frac 12}N^{kijl}\,;
\end{eqnarray}
here indices that are excluded from the (anti)symmetrization are
enclosed by vertical bars. Both expressions don't vanish in
general. Moreover, we are immediately led to $ M^{[ij]kl}=-
N^{[ij]kl}\ne 0$.

Using the major symmetry of Proposition 2, we recognize that the right
minor symmetries $M^{ij[kl]} =0$ and $N^{ij[kl]} = 0$ do not hold either,
since $ M^{ij[kl]}= M^{[kl]ij}$ and $ N^{ij[kl]}= N^{[kl]ij}$.
\end{proof}
Consequently, the tensors $M^{ijkl}$ and $N^{ijkl}$ do not belong to
the vector space ${\mathcal C}$ and cannot be written in Voigt's
notation.  Thus, these partial tensors $M$ and $N$ themselves cannot
serve as elasticity tensors for any material.

\subsubsection{Vector spaces of the  $N$- and $M$-tensors}

We will denote the set of all $N$-tensors by ${\mathcal N}$. It is a
vector space.  Indeed, $N^{ijkl}$ is defined as a fourth rank tensor
which is skew-symmetric in the middle indices and constructed from the
elasticity tensor. A linear combination of such tensors $\a N^{ijkl}
+\b\tilde {N}^{ijkl}$ will be also skew-symmetric. Moreover, it can be
constructed from the tensor $\a C^{ijkl} +\b\tilde {C}^{ijkl}$, which
satisfies the basic symmetries of the elasticity tensor.

A simplest way to describe a finite dimensional vector space, is to
write-down its basis. In the case of a tensor vector space, it is
enough to enumerate all the independent components of the tensor.
\begin{prop} For the space of $N$-tensors, ${\rm dim} \, {\cal N}=6$.
\end{prop}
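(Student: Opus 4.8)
The plan is to treat $\mathcal N$ as the space of all fourth-rank tensors carrying exactly the symmetries already established for $N$ in Propositions 2 and 3, namely the middle-pair antisymmetry $N^{ijkl}=-N^{ikjl}$ together with the major symmetry $N^{ijkl}=N^{klij}$, but with \emph{neither} minor symmetry. As announced in the remark preceding the statement, it then suffices to count the independent components of a generic such tensor, and I would do this by reducing $N$ to a lower-rank object whose free components are transparent.

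First I would use the 3d Hodge dual to dispose of the middle-pair antisymmetry: since $[jk]$ is antisymmetric I set $B^{iml}:=\frac12\,\epsilon_{jkm}N^{ijkl}$, so that $N^{ijkl}=\epsilon^{jkm}B^{iml}$. This is a linear bijection between tensors with middle-pair antisymmetry and arbitrary third-rank tensors $B$, so that symmetry is now built in and $B$ has $27$ free components. Next I would push the major symmetry through this substitution. Writing the right-hand side of $N^{ijkl}=N^{klij}$ as $\epsilon^{lin}B^{knj}$ and contracting with $\tfrac12\epsilon_{jks}$ collapses the first $\epsilon$-pair on the left and yields $B^{isl}=\tfrac12\,\epsilon^{lin}G_{ns}$, where $G_{ns}:=\epsilon_{jks}B^{knj}$ is a second-rank tensor. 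Re-inserting this expression for $B$ into the definition of $G$ reproduces $G$ with its indices transposed, so consistency forces $G_{ns}=G_{sn}$; conversely every symmetric $G$ delivers, via
\[
N^{ijkl}=\tfrac12\,\epsilon^{jkm}\epsilon^{lin}G_{nm},
\]
a tensor with both required symmetries, the major symmetry following directly from $G_{nm}=G_{mn}$. Hence $G\mapsto N$ is a linear isomorphism of $\mathcal N$ onto the space of symmetric second-rank tensors, whence $\dim\mathcal N=6$.

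The main obstacle is exactly this translation of the major symmetry, because it mixes the two index pairs whereas the dualization singles out the middle pair; one must verify that the contracted equation captures the \emph{full} content of the major symmetry and not merely part of it. Concretely, the major-symmetry identity splits into a piece antisymmetric and a piece symmetric in $(jk)$. The contraction above extracts the antisymmetric piece, and I would check separately that the symmetric piece imposes no new restriction: it reduces to $B^{isl}=-B^{lsi}$, which $\tfrac12\epsilon^{lin}G_{ns}$ satisfies automatically since $\epsilon^{lin}$ is already antisymmetric in $l$ and $i$. Establishing that the map $G\mapsto N$ is both injective (so the dimension is not smaller than $6$) and onto $\mathcal N$ (so it is not larger) is the crux.

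As an independent cross-check I would note the rank–nullity count for the linear map $C\mapsto C^{i[jk]l}$ on $\mathcal C$. Its kernel consists of the elasticity tensors that are symmetric in the middle pair as well, hence symmetric under all three adjacent transpositions and therefore totally symmetric; granting the $15$-dimensionality of the totally symmetric space quoted in the Introduction, the image has dimension $21-15=6$, in agreement with the direct count above. If one prefers to avoid the dualization entirely, the same number can be reached by the pedestrian route of enumerating the $27$ components $N^{ijkl}$ in the three middle-pair classes and applying $N^{ijkl}=N^{klij}$ to pair them up, which forces the diagonal-type components to vanish and identifies the remainder down to $6$ free parameters.
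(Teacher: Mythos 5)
Your argument is correct, but it follows a genuinely different route from the paper's. The paper proves Proposition 4 by direct enumeration: it writes out the components of $N^{ijkl}=C^{i[jk]l}$ in Voigt notation, finding exactly six independent nonzero combinations such as $N^{1122}=\frac 12\left(C^{12}-C^{66}\right)$, with all others vanishing or equal up to sign, and the independence of the 21 moduli $C^{IJ}$ does the rest. You instead dualize the antisymmetric middle pair and show that the space of fourth-rank tensors carrying middle-pair antisymmetry plus the major symmetry is linearly isomorphic to the symmetric second-rank tensors $G_{mn}$; your computation checks out (the contraction $\frac 12\epsilon_{jks}\epsilon^{jkm}=\delta_s^m$, the consistency argument forcing $G_{[ns]}=0$, and the verification that the symmetric-in-$(jk)$ piece of the major symmetry is automatic are all sound), and in effect you re-derive the paper's later Proposition 10, where the same parametrization appears as $N^{ijlk}=\epsilon^{ikm}\epsilon^{jln}\Delta_{mn}$ — your $G$ is a multiple of $\Delta_{mn}$. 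One caution about logical structure: your opening move treats ${\cal N}$ as \emph{defined} by those two symmetries, whereas the paper defines ${\cal N}$ as the image of the map $C\mapsto C^{i[jk]l}$ on ${\cal C}$; Propositions 2 and 3 give only the inclusion of ${\cal N}$ into your symmetry-defined space, so the dualization alone yields ${\rm dim}\,{\cal N}\le 6$. Your ``independent cross-check'' is therefore load-bearing, not optional: the rank–nullity count — the kernel consists of elasticity tensors with $C^{ijkl}=C^{ikjl}$, which together with the minor symmetries are invariant under the adjacent transpositions generating $S_4$, hence totally symmetric, of dimension $\binom{6}{2}=15$ (the content of the paper's Proposition 7, proved there independently of Proposition 4, so no circularity) — is what delivers ${\rm dim}\,{\cal N}=21-15=6$ and, combined with your upper bound, the surjectivity onto the symmetry space. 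With that step promoted from cross-check to main line, your proof is complete; compared with the paper's elementary component count it costs more machinery but buys the explicit second-rank parametrization of $N$ and the conceptual link to the $S\!A$-decomposition, since the kernel you identify is exactly the Cauchy space ${\cal S}$.
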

\begin{proof}
  We can write-down explicitly six components of the ``antisymmetric''
  tensor $N^{ijkl}$ as
\begin{eqnarray}\label{asym-alter}
  &&N^{1122}=\frac 12 \left(C^{12}-C^{66}\right),\quad N^{1133}=
  \frac 12 \left(C^{13}-C^{55}\right),\quad N^{1123}=\frac 12 
  \left(C^{14}-C^{56}\right),\nonumber\\
  &&N^{2233}= \frac 12 \left(C^{23}-C^{44}\right),\quad N^{2231}=
  \frac 12 \left(C^{25}-C^{46}\right),\quad N^{1233}=
  \frac 12 \left(C^{36}-C^{45}\right).
\end{eqnarray}
All other components vanish or differ from the components given in
(\ref{asym-alter}) only in sign.  Since all the components $C^{IJ}$
with $I\le J$ are assumed to be independent, then the components of
$N^{ijkl}$ of (\ref{asym-alter}) are also independent.
\end{proof}

The set of all $M$-tensors is also a vector space, which we denote by
${\cal M}$.
\begin{prop} For the space of $M$-tensors, ${\rm dim} \, {\cal M}=21$.
\end{prop}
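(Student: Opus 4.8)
The plan is to avoid writing down an explicit $21$-element basis (which is possible, in the tedious style of Proposition~4) and instead to compute $\dim\mathcal M$ by rank--nullity. The assignment $\Phi\colon\mathcal C\to\mathcal M$, $C\mapsto M^{ijkl}=C^{i(jk)l}$, is linear and, by the very definition of $\mathcal M$ as the set of all $M$-tensors, surjective onto $\mathcal M$. Since $\dim\mathcal C=21$ by Proposition~1, it suffices to prove that $\Phi$ is injective, i.e.\ that $\ker\Phi=\{0\}$; then $\dim\mathcal M=21-\dim\ker\Phi=21$.

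The real content is therefore the computation of $\ker\Phi$. A tensor $C\in\mathcal C$ lies in $\ker\Phi$ exactly when $C^{i(jk)l}=0$, that is, when $C$ is \emph{antisymmetric} in its middle pair, $C^{ijkl}=-C^{ikjl}$. But every $C\in\mathcal C$ is also \emph{symmetric} in its last pair by the right minor symmetry (\ref{s1-sym}), $C^{ijkl}=C^{ijlk}$. I claim that a tensor which is simultaneously antisymmetric in $(jk)$ and symmetric in $(kl)$ must vanish, and this alone forces $\ker\Phi=\{0\}$ (the left minor and major symmetries are not even needed here).

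To see this I would simply chain the two relations. Starting from $C^{ijkl}$ and applying alternately the middle antisymmetry (sign $-1$) and the right minor symmetry (sign $+1$), I obtain
\[
C^{ijkl}=-C^{iklj}=C^{iljk}=-C^{ijkl},
\]
after six elementary transpositions, whence $C^{ijkl}=-C^{ijkl}$ and $C=0$. The structural reason behind this collapse, which is the one conceptual point to get right, is that the transpositions $(23)$ and $(34)$ are conjugate in the symmetric group $S_3$ acting on the last three slots, yet $C$ is required to transform with \emph{opposite} signs under them; since $\bigl((23)(34)\bigr)^3=\mathrm{id}$ while the product of the six corresponding signs equals $-1$, no nonzero $C$ can carry such an assignment. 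The main obstacle is thus not computational but lies entirely in recognizing this symmetry/antisymmetry incompatibility; once it is identified, verifying the displayed chain is routine.

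As a consistency check I would note that the same rank--nullity bookkeeping reproduces Proposition~4: the companion map $C\mapsto N^{ijkl}=C^{i[jk]l}$ has kernel exactly the tensors with vanishing middle antisymmetrization, i.e.\ those symmetric in all of the adjacent pairs $(12)$, $(23)$, $(34)$, which (these adjacent transpositions generating $S_4$) are precisely the totally symmetric tensors, the $15$-dimensional space $\mathbf S$. This gives $\dim\mathcal N=21-15=6$, dovetailing with the value already established for $\mathcal N$.
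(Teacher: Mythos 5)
Your proof is correct, but it takes a genuinely different route from the paper. The paper proves the proposition by brute enumeration: it lists all 21 independent components of $M^{ijkl}$ explicitly in terms of the Voigt constants $C^{IJ}$ (e.g.\ $M^{1122}=\frac12(C^{12}+C^{66})$, etc.), observes that these are linearly independent, and caps the dimension at 21 because every $M$ is built from the 21 constants $C^{IJ}$. You instead apply rank--nullity to the linear map $\Phi\colon\mathcal C\to\mathcal M$, $C\mapsto C^{i(jk)l}$, which is surjective by definition of $\mathcal M$, and reduce everything to showing $\ker\Phi=\{0\}$. Your kernel computation is sound: a $C\in\ker\Phi$ is antisymmetric in $(jk)$ while the right minor symmetry makes it symmetric in $(kl)$, and the alternating chain $C^{ijkl}=-C^{iklj}=C^{iljk}=-C^{ijkl}$ checks out (each displayed equality is one middle-pair swap followed by one last-pair swap, and $(23)(34)$ is a $3$-cycle, so three rounds restore the index order with net sign $-1$); this is the standard fact that overlapping symmetry and antisymmetry annihilate a tensor, and you correctly note that only the right minor symmetry of $\mathcal C$ is needed. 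What each approach buys: the paper's component list is not wasted labor, since those explicit formulas are reused later (e.g.\ in the inconsistency example (\ref{xx1})--(\ref{alt5x}) and the transition formulas between the decompositions), whereas your argument is shorter, coordinate-free, and yields strictly more, namely that $\Phi$ is an isomorphism $\mathcal C\cong\mathcal M$ --- which cleanly explains the paper's remark that $M$ carries as many components as $C$ itself and hence cannot serve as a ``deviation'' measure. Your consistency check is also right: for $C\mapsto C^{i[jk]l}$ the kernel consists of tensors symmetric under the adjacent transpositions $(12)$, $(23)$, $(34)$, hence totally symmetric, a $15$-dimensional space, recovering $\dim\mathcal N=21-15=6$ of Proposition~4 and tying both dimension counts to a single mechanism.
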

\begin{proof}
  The dimension of the vector space ${\cal M}$ can also be calculated
  by considering the independent components of a generic tensor
  $M^{ijkl}$. We find,
\begin{eqnarray}\label{sym-alter}
  &&M^{1111}=C^{11},\quad M^{1113}=C^{15},\quad M^{1112}=C^{16},\quad 
  M^{2222}=C^{22}\,,\nonumber\\
  &&M^{2221}=C^{26},\quad M^{3333}=C^{33},\quad M^{3332}=C^{34},\quad 
  M^{3331}=C^{35},\nonumber\\
  &&M^{2331}=C^{45},\quad M^{3221}=C^{46},\quad M^{3113}=C^{55},\quad 
  M^{3112}=C^{56}\,,\nonumber\\
  &&M^{1122}=\frac 12 \left(C^{12}+C^{66}\right),\quad M^{1133}=
  \frac 12 \left(C^{13}+C^{55}\right),\quad M^{2223}=C^{24},\nonumber\\
  &&M^{1123}=\frac 12 \left(C^{14}+C^{56}\right),\quad M^{2233}=
  \frac 12 \left(C^{23}+C^{44}\right),\quad M^{2332}=C^{44}\,,\nonumber\\
  &&M^{2231}=\frac 12 \left(C^{25}+C^{46}\right),\quad M^{1233}=
  \frac 12 \left(C^{36}+C^{45}\right),\quad M^{1221}=C^{66}\,.
\end{eqnarray}
All these 21 components are linearly independent, thus the dimension
of the vector space ${\cal M}$ is at least 21.  However, since every
element of ${\cal M}$ is defined in terms of 21 independent elastic
constants $C^{IJ}$, the dimension of ${\cal M}$ cannot be greater than
21.
\end{proof}

\subsubsection{Algebraic properties of   $M$ and $N$ tensors}

Observe some principal features of the tensors $M^{ijkl}$ and
$N^{ijkl}$:\medskip

{\it (i)~Inconsistency.} In general, a certain component of
$C^{ijkl}$, say $C^{1223}$, can be expressed in different ways in terms of
the components of $M^{ijkl}$ a $N^{ijkl}$:
\begin{equation}\label{xx1}
  C^{1223}\,\stackrel{(22)}{=}\,M^{1223}+\underbrace{N^{1223}}_{=0}
  \,\stackrel{\text{maj}}{=}\,M^{2312}\stackrel{\text{sym}}{=}M^{2132}
  \,\stackrel{(23)}{=}\,C^{46}\,.
\end{equation}
On the other hand, we have
\begin{equation}\label{alt5}
C^{1223}=C^{2123}\,\stackrel{(22)}{=}\,M^{2123}+N^{2123}\,.
\end{equation}
With
\begin{equation}\label{alt5x}
M^{2123}=\frac 12\left(C^{46}+C^{25}\right)\qquad\text{and}\qquad 
 N^{2123}=\frac 12\left(C^{46}-C^{25}\right)\ne 0\,,
\end{equation}
we recover the result in (\ref{xx1}), but is was achieved with the help
of a non-vanishing component of $N$.

{\it (ii)~Reducibility.} Since, in general, the tensor $M^{ijkl}$ is
not completely symmetric, a finer decomposition is possible,
\begin{equation}\label{alt5y}
  M^{ijkl}=M^{(ijkl)}+K^{ijkl}\,. 
\end{equation}
Accordingly, $C^{ijkl}$ can be decomposed into three tensorial pieces:
\begin{equation}\label{alt5z}
 C^{ijkl}=M^{(ijkl)}+K^{ijkl}+N^{ijkl}\,.
\end{equation}

{\it (iii)~Vector spaces.} The ``partial'' vector spaces, ${\cal
  M}$ and ${\cal N}$, are not subspaces of the vector space ${\cal
  C}$ and their sum ${\cal M}+{\cal N}$ is not equal to ${\cal C}$.
\medskip

Thus, the $M\!N$-decomposition is problematic from an algebraic point
of view. Our aim is to present an alternative irreducible
decomposition with better algebraic properties.
\bigskip

\subsection{Irreducible decomposition of  $C^{ijkl}$}

\subsubsection{Definitions of  $S$ and $A$ and their symmetries}

In Eq.(\ref{tab1}) of the Appendix, we decomposed a fourth rank
tensor irreducibly. Let us apply it to the elasticity tensor
$C^{ijkl}$. Since the dimension of 3d space is less than the rank of
the tensor, the last diagram in (\ref{tab1}), representing
$C^{[ijkl]}$, is identically zero. Also the minor symmetries remove
some of the diagrams. Dropping the diagrams which are antisymmetric in
the pairs of the indices $(i_1,i_2)$ and $(i_3,i_4)$, we are
eventually left with the decomposition
\begin{eqnarray}\label{tab2}
  \Yvcentermath1\young(\i)\otimes\young(\j)\otimes\young(\k)\otimes
  \young(\l)&=&\Yvcentermath1 \frac{1}{4!}\,\young({\i}{\j}{\k}{\l})
  \Yvcentermath1+\b\Bigg(\young({\i}{\j}{\k},{\l})+ 
  \young({\i}{\j}{\l},{\k})\Bigg)
  \Yvcentermath1+\g \,\young({\i}{\j},{\k}{\l})\,.
\end{eqnarray}
Let us now apply the major symmetry. It can be viewed as pair of
simultaneous permutations $\i\!\leftrightarrow\! \k$ and
$\j\!\leftrightarrow\! \l$. The first and the last diagrams in
(\ref{tab2}) are invariant under those transformations. The two
diagrams in the middle change their signs and are thus identically
zero. Thus, we are left with irreducible parts:
\begin{eqnarray}\label{tab2x}
\Yvcentermath1\young(\i)\otimes\young(\j)\otimes\young(\k)
\otimes\young(\l)&=&\Yvcentermath1 \frac{1}{4!}\,\young({\i}{\j}{\k}{\l})
\Yvcentermath1+\g \,\young({\i}{\j},{\k}{\l})\,.
\end{eqnarray}
In correspondence with the first table, the first subtensor of
$C^{ijkl}$ is derived by complete symmetrization of its indices
\begin{eqnarray}\label{sub1}
S^{ijkl}=\Big(I+(i_1,i_2)+(i_1,i_3)+\cdots+(i_1,i_2,i_3)+
\cdots+(i_1,i_2,i_3,i_4)\Big)C^{ijkl}\,.
\end{eqnarray}
Here the parentheses  denote the cycles of permutations. 
Consequently,
\begin{eqnarray}\label{sub1x}
S^{ijkl}:=C^{(ijkl)}=\frac{1}{4!}\Big(C^{ijkl}+C^{jikl}+C^{kjil}+\cdots+C^{kijl}+\cdots+
C^{lijk}\Big)\,.
\end{eqnarray}
On the right hand side, we have a sum of $4!=24$ terms with all
possible orders of the indices. If we take into account the symmetries
(\ref{s1-sym}), (\ref{s2-sym}), and (\ref{paircom}) of $C^{ijkl}$, we
can collect the terms:
\begin{equation}\label{firsty}
S^{ijkl}=C^{(ijkl)}=\frac 13(C^{ijkl}+C^{iklj}+ C^{iljk})\,.
\end{equation}

According to (\ref{tab2x}), the second irreducible piece of $C^{ijkl}$
can be defined as
\begin{equation}\label{A=C-S}
A^{ijkl}:=C^{ijkl}-C^{(ijkl)}=C^{ijkl}-S^{ijkl}\,.
\end{equation}
This result can also be derived by evaluating the last diagram in
(\ref{tab2x}).
Substitution of (\ref{firsty}) into the right-hand side of
(\ref{A=C-S}) yields
\begin{equation}\label{firsty+}
A^{ijkl}=\frac 13(2C^{ijkl}-C^{ilkj}-C^{iklj})\,.
\end{equation}
If we totally symmetrize the left- and the right-hand sides of
  (\ref{A=C-S}), an immediate consequence is
\begin{equation}\label{A()=0}
A^{(ijkl)}=0\,.
\end{equation}
If we symmetrize (\ref{firsty+}) with respect to the indices $jkl$,
we recognize that its right-hand side vanishes. Accordingly, we
have the
\begin{prop} The tensor $A$ fulfills the additional symmetry
\begin{equation}\label{A-ident}
A^{i(jkl)}=0\qquad\text{\rm or}\qquad A^{ijkl}+A^{iklj}+A^{iljk}=0\,.
\end{equation}
\end{prop}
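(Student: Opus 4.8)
The plan is to read off the symmetry directly from the explicit formula (\ref{firsty+}), exactly as the sentence preceding the Proposition suggests. Starting from
\[
A^{ijkl}=\tfrac 13\bigl(2C^{ijkl}-C^{ilkj}-C^{iklj}\bigr),
\]
I would apply total symmetrization over the last three indices $j,k,l$ to both sides, leaving the first slot $i$ untouched. This is the whole of the computation; the rest is bookkeeping.

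The one thing to observe is that symmetrization over a fixed set of slots is by definition invariant under any permutation of the labels sitting in those slots, since the symbol $(\,\cdot\,)$ averages over \emph{all} orderings. Hence, after symmetrizing over $j,k,l$, the three terms on the right collapse to a single object,
\[
C^{i(lkj)}=C^{i(klj)}=C^{i(jkl)},
\]
and the numerical coefficients combine as $2-1-1=0$, giving $A^{i(jkl)}=0$. Expanding the symmetrizer over the three slots then reproduces the cyclic form $A^{ijkl}+A^{iklj}+A^{iljk}=0$ quoted in (\ref{A-ident}).

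There is essentially no obstacle here; the only point requiring a line of care is the permutation-invariance just used, so if I wanted a second, independent check I would instead run the argument through the defining relation (\ref{A=C-S}), namely $A=C-S$ with $S^{ijkl}=C^{(ijkl)}$. Symmetrizing over $j,k,l$ gives $A^{i(jkl)}=C^{i(jkl)}-S^{i(jkl)}$; since $S$ is totally symmetric, $S^{i(jkl)}=S^{ijkl}=C^{(ijkl)}$, so it remains only to verify $C^{i(jkl)}=C^{(ijkl)}$. This last equality follows from the right minor symmetry $C^{ij[kl]}=0$ of (\ref{minor}): it identifies the six permutations of $j,k,l$ in equal pairs, reducing the six-term average $C^{i(jkl)}$ to the three-term average $\tfrac13(C^{ijkl}+C^{iklj}+C^{iljk})$, which is precisely $C^{(ijkl)}=S^{ijkl}$ as recorded in (\ref{firsty}). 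Either route closes the proof.
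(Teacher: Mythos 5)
Your first computation is precisely the paper's own proof: the paper obtains the Proposition by symmetrizing (\ref{firsty+}) over $j,k,l$ and noting that the right-hand side vanishes, exactly as you do with the coefficient count $2-1-1=0$. Your second route via $A=C-S$ and the identity $C^{i(jkl)}=C^{(ijkl)}$ is a sound independent check (and correctly isolates the role of the right minor symmetry, which is also what reduces the six-term symmetrizer of $A$ to the three-term cyclic form), but it is not needed beyond the first argument.
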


\subsubsection{Vector spaces of the  $S$- and $A$-tensors}

We denote the $21$-dimensional vector space of $C$ by ${\cal C}$. The
irreducible decomposition of $C$ signifies the reduction of ${\cal C}$
to the {\it direct sum} of its two subspaces, ${\cal S}\subset{\cal
  C}$ for the tensor $S$, and ${\cal A}\subset{\cal C}$ for the tensor
$A$,
\begin{equation}\label{alg1}
{\cal C}={\cal S}\oplus{\cal A}\,.
\end{equation}
The vector spaces ${\cal S}$ and ${\cal A}$ have only zero in their
intersection and the decomposition of the corresponding tensors is
{\it unique.} According to Proposition 1, the sum of the dimensions of
the subspaces is, equal to 21. The two irreducible parts $S^{ijkl}$
and $A^{ijkl}$ preserve their symmetries under arbitrary linear frame
transformations. In particular, they fulfill the minor and major
symmetries of $C^{ijkl}$ likewise. Accordingly, $S$ and $A$, or $S$
alone (but not $A$ alone, as we will see later) can be elasticity
tensors for a suitable material---in contrast to $M$ and $N$.

The dimensions of the vector spaces of $S$ and $A$ can now be easily
determined.
\begin{prop}
For the vector space ${\cal S}$ of the tensors $S^{ijkl}$,
\begin{equation}\label{15d}
{\rm dim}\,{\cal S}=15\,.
\end{equation}
\end{prop}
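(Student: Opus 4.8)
The plan is to identify $\mathcal{S}$ with the full space of completely symmetric fourth-rank tensors on 3-dimensional space and then count its independent components by the standard multiset (``stars and bars'') argument.

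First I would show that $\mathcal{S}$ is \emph{exactly} the space of all totally symmetric $\binom{4}{0}$-tensors in 3d. The inclusion of $\mathcal{S}$ into that space is immediate from the defining formula $S^{ijkl}=C^{(ijkl)}$ in (\ref{firsty}), which is manifestly symmetric under every permutation of the four indices. For the reverse inclusion, I would observe that any totally symmetric tensor $T^{ijkl}$ automatically satisfies the minor symmetries (\ref{s1-sym}), (\ref{s2-sym}) and the major symmetry (\ref{paircom}), so it qualifies as an elasticity tensor; taking $C=T$ in (\ref{firsty}) then gives $T^{(ijkl)}=T^{ijkl}$, whence $T\in\mathcal{S}$. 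Thus $\mathcal{S}$ coincides with the space of totally symmetric rank-4 tensors, and no information is lost by counting in that larger, cleaner space.

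Next I would carry out the count. Because $S^{ijkl}$ is invariant under all permutations of its indices, a component is determined solely by how many of the four index slots carry the value $1$, $2$, or $3$; that is, the independent components are in bijection with multisets of size $4$ drawn from $\{1,2,3\}$. Their number is $\binom{4+3-1}{4}=\binom{6}{4}=15$, which is equivalently the dimension of the space of homogeneous degree-4 polynomials in three variables. To make this concrete and to exhibit an explicit basis in the spirit of Proposition 1, I would list the fifteen representatives: the three of the form $S^{1111},S^{2222},S^{3333}$; the six of the form $S^{1112},S^{1113},S^{2221},S^{2223},S^{3331},S^{3332}$; the three of the form $S^{1122},S^{1133},S^{2233}$; and the three of the form $S^{1123},S^{1223},S^{1233}$.

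The only point needing care --- and the main obstacle, though a mild one --- is the lower bound $\dim\mathcal{S}\ge 15$, i.e.\ that these fifteen components can be assigned \emph{independently} rather than being tied together by hidden relations. This is settled precisely by the reverse inclusion established above: since every totally symmetric tensor already lies in $\mathcal{S}$, one may freely prescribe arbitrary values for the fifteen representative components and assemble a genuine element of $\mathcal{S}$ realizing them, so no linear relation among them is forced. Combining this with the upper bound $\dim\mathcal{S}\le 15$ coming from total symmetry yields $\dim\mathcal{S}=15$.
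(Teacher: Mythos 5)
Your proof is correct and follows essentially the same route as the paper, which simply invokes the standard count $\binom{n+p-1}{p}=\binom{6}{2}=15$ for totally symmetric tensors of rank $4$ in $3$ dimensions (citing Schouten); your multiset argument is exactly that formula made explicit. The extra care you take --- verifying both inclusions so that $\mathcal{S}$ is \emph{precisely} the space of totally symmetric tensors, and ruling out hidden relations among the fifteen representative components --- is left implicit in the paper but is a sound and welcome addition, not a different method.
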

\begin{proof}
  The number of independent components of a totally symmetric tensor
  of rank $p$ in n dimensions is
  $\binom{n+p-1}{p}=\binom{n-1+p}{n-1}$ or, for dimension 3 and rank 4,
  $\binom{6}{2}=15$, see Schouten (1954)\cite{SchoutenRicci}.
\end{proof}
According to Proposition 1, we have ${\rm
  dim}\,\mathcal{C}=21$. Because of (\ref{alg1}) and (\ref{15d}), we have
\begin{prop}
For the vector space ${\cal A}$ of the tensors $A^{ijkl}$,
\begin{equation} {\rm dim}\,{\cal A}=6\,.
\end{equation}
\end{prop}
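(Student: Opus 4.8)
The plan is to obtain $\dim\mathcal{A}$ by subtraction, exploiting the direct sum decomposition $\mathcal{C}=\mathcal{S}\oplus\mathcal{A}$ of (\ref{alg1}) together with the two dimensions already in hand: $\dim\mathcal{C}=21$ from Proposition 1 and $\dim\mathcal{S}=15$ from Proposition 7. For a genuine direct sum of finite-dimensional vector spaces the dimensions add, so the only point that needs securing is that (\ref{alg1}) really is a \emph{direct} sum and not merely a sum; once that is established, the count $\dim\mathcal{A}=21-15=6$ follows at once.

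To make the direct-sum claim rigorous, I would exhibit total symmetrization as a linear projection. Define $P\colon\mathcal{C}\to\mathcal{C}$ by $(PC)^{ijkl}:=C^{(ijkl)}=S^{ijkl}$, cf.\ (\ref{firsty}). The map $P$ is linear and idempotent, since totally symmetrizing an already totally symmetric tensor returns it unchanged, so $P^2=P$. Its image is exactly $\mathcal{S}$, and by (\ref{A=C-S}) we have $A=C-PC=(\mathrm{Id}-P)C$, so that $\mathcal{A}=\ker P$; indeed $A\in\mathcal{A}$ if and only if $A^{(ijkl)}=0$, which is (\ref{A()=0}). Any projection splits its domain as $\mathcal{C}=\operatorname{im}P\oplus\ker P=\mathcal{S}\oplus\mathcal{A}$. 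The triviality of the intersection can also be seen directly: a tensor lying in $\mathcal{S}$ equals its own total symmetrization, whereas a tensor lying in $\mathcal{A}$ has vanishing total symmetrization, so any common element $T$ satisfies $T^{ijkl}=T^{(ijkl)}=0$.

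With the direct sum in place, additivity of dimension yields
\[
\dim\mathcal{A}=\dim\mathcal{C}-\dim\mathcal{S}=21-15=6\,,
\]
as claimed. I do not anticipate any real obstacle here: the single structural input is the projector identity $P^2=P$, and the two numerical inputs are quoted verbatim from Propositions 1 and 7. As an independent check one could instead count the independent components of $A^{ijkl}$ directly, imposing the minor and major symmetries of (\ref{minor}) together with the extra identity $A^{i(jkl)}=0$ of Proposition 6; the same value should emerge, consistent with the later representation of $A$ by a single symmetric second-rank tensor $\Delta_{ij}$, which likewise carries $6$ independent components.
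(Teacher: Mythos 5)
Your proof is correct and follows essentially the same route as the paper, which obtains $\dim\mathcal{A}=21-15=6$ by subtraction from $\dim\mathcal{C}=21$ (Proposition 1) and $\dim\mathcal{S}=15$ (Proposition 7) via the direct sum (\ref{alg1}). The only difference is that you make explicit, via the symmetrization projector $P$ with $P^2=P$, the directness of the sum that the paper merely asserts in the text preceding (\ref{alg1}) --- a welcome bit of added rigor, not a different argument.
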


\subsubsection{Irreducible parts in Voigt's notation}

In Voigt's 6d notation we have
\begin{equation}\label{decx1}
C^{IJ}=S^{IJ}+A^{IJ}\,\qquad\text{with}\qquad C^{[IJ]}=S^{[IJ]}=A^{[IJ]}=0\,.
\end{equation}
The 6 $\times$ 6 matrix $S^{IJ}$ has 15 independent components. We
choose the following ones [see Voigt (1928),
Eq.~(36) on p.578],
\begin{eqnarray}\label{sym-voigt}
  &&{}S^{11}=C^{11}\,,\quad {}S^{22}=C^{22}\,,\quad {}S^{33}=C^{33}\,,\nonumber\\
  &&{}S^{15}=C^{15}\,,\quad {}S^{16}=C^{16}\,,\quad {}S^{26}=C^{26}\,,\nonumber\\
  &&{}S^{24}=C^{24}\,,\quad {}S^{34}=C^{34}\,,\quad {}S^{35}=C^{35}\,,\nonumber\\
  &&{}S^{12}=\frac 13 \left(C^{12}+2C^{66}\right),\quad {}S^{13}=
  \frac 13 \left(C^{13}+2C^{55}\right),\nonumber\\
  &&{}S^{14}=\frac 13 \left(C^{14}+2C^{56}\right),\quad {}S^{23}=
  \frac 13 \left(C^{23}+2C^{44}\right),\nonumber\\
  &&{}S^{25}=\frac 13 \left(C^{25}+2C^{46}\right),\quad {}S^{36}=
  \frac 13 \left(C^{36}+2C^{45}\right).
\end{eqnarray}
The 6 $\times$ 6 matrix of $A^{IJ}$ has 6 independent components.  We
choose the following ones,
\begin{eqnarray}\label{asym-voigt}
  &&{}A^{12}=\frac 23 \left(C^{12}-C^{66}\right),\quad {}A^{13}
  =\frac 23 \left(C^{13}-C^{55}\right),\quad
  {}A^{14}=\frac 23 \left(C^{14}-C^{56}\right),\nonumber\\
  && {}A^{23}
  =\frac 23 \left(C^{23}-C^{44}\right),\quad{}A^{25}=\frac 23 \left(C^{25}-C^{46}\right),\quad {}A^{36}
  =\frac 23 \left(C^{36}-C^{45}\right)\,.
\end{eqnarray}
The decomposition (\ref{decx1}) can be explicitly presented as
\begin{eqnarray}\label{decx1*}\hspace{-30pt}
\begin{bmatrix}
  C^{11} & C^{12} & C^{13} & C^{14} & C^{15} & C^{16} \\
* & C^{22} & C^{23} & C^{24} & C^{25} & C^{26} \\
* & * & C^{33} & C^{34} & C^{35} & C^{36} \\
* & * & * & C^{44} & C^{45} & C^{46} \\
* & * & * & * & C^{55} & C^{56} \\
* & * & * & * & * & C^{66} \end{bmatrix}&=&
\begin{bmatrix}
  {\bm S^{11}} & {\bm S^{12}} & {\bm S^{13}} & {\bm S^{14}} &
  {\bm S^{15}} &{\bm S^{16}} \\
  * &{\bm S^{22}} & {\bm S^{23}} &{\bm S^{24}} &{\bm  S^{25}} &{\bm S^{26} }\\
  * & * &{\bm S^{33}} &{\bm S^{34}} &{\bm S^{35}} & {\bm S^{36}} \\
  * & * & * & S^{23} & S^{36} & S^{25} \\
  * & * & * & * & S^{13} & S^{14} \\
  * & * & * & * & * & S^{12} \end{bmatrix}\nonumber\\
&+&
\begin{bmatrix}
  0 & {\bm A^{12}} &{\bm A^{13}} &{\bm A^{14}} & 0 & 0 \\
  * & 0 &{\bm A^{23}} & 0 &{\bm A^{25} }& 0 \\
  * & * & 0  & 0 & 0 & {\bm A^{36}} \\
  * & * & * & -{\scriptstyle \frac 12} A^{23} &-{\scriptstyle \frac
    12} A^{36} & -{\scriptstyle \frac 12} A^{25} \\
  * & * & * & * & -{\scriptstyle \frac 12}A^{13} &
  -{\scriptstyle \frac 12} A^{14} \\
  * & * & * & * & * & -{\scriptstyle \frac 12} A^{12} \end{bmatrix}\,.
\end{eqnarray}
Here, we use boldface for the independent components of the
tensors. Note that all three matrices are symmetric.

\subsection{Comparing the  $S\!A$- and the $M\!N$-decompositions with each other}

We would now like to compare the two different decompositions:
\begin{eqnarray}\label{decomp1}
  \underbrace{C^{ijkl}}_{21}= \underbrace{S^{ijkl}}_{15}+ \underbrace{A^{ijkl}}_{6}
= \underbrace{M^{ijkl}}_{21}+ \underbrace{N^{ijkl}}_{6}\,.
\end{eqnarray}
The dimensions of the corresponding vector spaces are displayed
explicitly. This makes it immediately clear that $A$ can be expressed
in terms of $N$ and vice versa. Take the antisymmetric part of
(\ref{decomp1}) with respect to $j$ and $k$ and find:
\begin{prop}
The auxiliary quantity $N^{ijkl}$ can be expressed in terms of the
irreducible elasticity $A^{ijkl}$ as follows:
\begin{eqnarray}\label{decomp2}
  N^{ijkl}=A^{i[jk]l}\,.
\end{eqnarray}
Its inverse reads,
\begin{eqnarray}\label{decomp4}
  A^{ijkl}= {\scriptstyle\frac 43} N^{ij(kl)}\,.
\end{eqnarray}
\end{prop}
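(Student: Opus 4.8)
The statement splits into two index identities, which I would treat separately.

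For the first, $N^{ijkl}=A^{i[jk]l}$, the plan is exactly the hint preceding the proposition: apply the $S\!A$-split to the defining antisymmetrization of $N$. Writing $C^{ijkl}=S^{ijkl}+A^{ijkl}$ and antisymmetrizing over the middle pair $j,k$ gives $C^{i[jk]l}=S^{i[jk]l}+A^{i[jk]l}$. Since $S=C^{(ijkl)}$ is totally symmetric, it is in particular symmetric in $j,k$, so $S^{i[jk]l}=0$; recalling $N^{ijkl}:=C^{i[jk]l}$ from (\ref{MNdef}) yields $N^{ijkl}=A^{i[jk]l}$ directly. This step is immediate and carries no real difficulty.

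For the inverse, $A^{ijkl}=\tfrac43 N^{ij(kl)}$, I would substitute the relation just obtained and reduce using only the symmetries of $A$ that were already established. Symmetrizing $N^{ijkl}=A^{i[jk]l}$ over $k,l$ gives $N^{ij(kl)}=\tfrac12\bigl(A^{i[jk]l}+A^{i[jl]k}\bigr)=\tfrac14\bigl(A^{ijkl}-A^{ikjl}+A^{ijlk}-A^{iljk}\bigr)$. The inherited right minor symmetry $A^{ij[kl]}=0$ collapses $A^{ijlk}$ to $A^{ijkl}$ and, in the same way, rewrites $A^{ikjl}=A^{iklj}$, leaving $N^{ij(kl)}=\tfrac14\bigl(2A^{ijkl}-A^{iklj}-A^{iljk}\bigr)$. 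The crucial ingredient is then the cyclic identity (\ref{A-ident}), $A^{ijkl}+A^{iklj}+A^{iljk}=0$, which gives $A^{iklj}+A^{iljk}=-A^{ijkl}$; inserting this produces $N^{ij(kl)}=\tfrac14\bigl(2A^{ijkl}+A^{ijkl}\bigr)=\tfrac34A^{ijkl}$, i.e.\ $A^{ijkl}=\tfrac43 N^{ij(kl)}$.

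Alternatively, the second identity could be verified by pure computation from $C$: expanding $N^{ij(kl)}=\tfrac14\bigl(C^{ijkl}-C^{ikjl}+C^{ijlk}-C^{iljk}\bigr)$ and applying the right minor symmetry $C^{ij[kl]}=0$ reduces it to $\tfrac14\bigl(2C^{ijkl}-C^{ikjl}-C^{iljk}\bigr)$, which is exactly $\tfrac34$ times the expression for $A^{ijkl}$ in (\ref{firsty+}) after the same minor symmetry is applied there. I prefer the first route because it isolates the one genuine structural fact being used, namely the extra symmetry $A^{i(jkl)}=0$ of the non-Cauchy part. The only real obstacle is bookkeeping: one must apply each minor symmetry to the correct index pair (always the positions $3,4$) before invoking the cyclic identity, since a misplaced transposition silently changes the coefficient. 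No step introduces any analytic difficulty; both directions are finite linear manipulations in the $6$-dimensional spaces $\mathcal N$ and $\mathcal A$, and since both maps are linear it suffices to check them on the level of components as written.
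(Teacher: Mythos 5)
Your proof is correct and takes essentially the same route as the paper: the first identity by antisymmetrizing $C^{ijkl}=S^{ijkl}+A^{ijkl}$ over $j,k$ (the total symmetry of $S$ killing its contribution), and the second by symmetrizing $N^{ijkl}=A^{i[jk]l}$ over $k,l$, then using the right minor symmetry of $A$ and the cyclic identity $A^{i(jkl)}=0$ of Proposition~6 to reach $N^{ij(kl)}=\frac{3}{4}A^{ijkl}$, exactly as in Eq.~(\ref{decomp3}). Your alternative direct verification from $C$ via Eq.~(\ref{firsty+}) is also valid but is not needed.
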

\begin{proof}
Resolve (\ref{decomp2}) with respect to $A$. For this purpose we recall
that $A$ obeys the right minor symmetry: $A^{ijkl}=A^{ij(kl)}$. This
suggests to take the symmetric part of (\ref{decomp2}) with respect to
$k$ and $l$. Then,
\begin{eqnarray}\label{decomp3}
  N^{ij(kl)}&=&{\scriptstyle \frac
    12}\left(A^{ijkl}-A^{i(kl)j}\right)={\scriptstyle 
    \frac 14} \left(2A^{ijkl}-A^{iklj}-A^{ilkj}\right)\cr
  &\stackrel{(37)}{=}&{\scriptstyle 
    \frac 14} \left[2A^{ijkl}-(-A^{iljk}-A^{ijkl})-A^{ilkj}\right]
  = {\scriptstyle  \frac 34}A^{ijkl} \,.
\end{eqnarray}
\end{proof}

Both, $A$ and $N$ have only 6 independent components. In 3d this means
that it must be possible to represent them as a symmetric tensor of
2nd rank. With the operator $\frac 12\epsilon_{mij}$, we can always
map an antisymmetric index pair $ij$ to a corresponding vector index
$m$. The tensor $A^{ijkl}$ has 4 indices, that is, we have to apply
the $\epsilon$ operator twice. Since $A^{ijkl}$ obeys the left and
right minor symmetries, that is, $A^{[ij]kl}=A^{ij[kl]}=0$, the
$\epsilon$ has always to transvect one index of the first pair and one
index of the second pair. This leads, apart from trivial
rearrangements, to a suitable definition.
{\begin{prop}
The irreducible elasticity $A^{ijkl}$ can be equivalently described by
a symmetric 2nd rank tensor
\begin{equation}\label{Delta}
  \Delta_{mn}:=\frac 14\epsilon_{mik}\epsilon_{njl}A^{ijkl}\,,
\end{equation}
with the inverse
\begin{equation}\label{Delta-1}
N^{ijlk}=\epsilon^{ikm}\epsilon^{jln}\Delta_{mn}\qquad\text{or}
\qquad A^{ijkl}=\frac 43\epsilon^{i(k|m}\epsilon^{j|l)n}\Delta_{mn}\,.
\end{equation}
\end{prop}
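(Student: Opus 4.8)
The plan is to verify three things in turn: that $\Delta_{mn}$ is symmetric, that the two displayed formulas really do invert the definition (returning $N$ and $A$ respectively), and that the resulting map $A\leftrightarrow\Delta$ is a linear bijection. The last point is cheap once the inverses are in hand: a symmetric second-rank tensor in 3d has exactly $6$ independent components, which already matches $\dim\mathcal A=6$ from Proposition 8, so exhibiting mutually inverse linear maps between the two spaces forces bijectivity. Thus the real content is the symmetry of $\Delta$ and the two inverse relations.

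First I would establish $\Delta_{mn}=\Delta_{nm}$. Writing $\Delta_{nm}=\frac14\epsilon_{nik}\epsilon_{mjl}A^{ijkl}$ and relabelling the dummy pairs $i\leftrightarrow j$ and $k\leftrightarrow l$ recasts the right-hand side as $\frac14\epsilon_{mik}\epsilon_{njl}A^{jilk}$. The left and right minor symmetries that $A$ inherits from $C$, namely $A^{[ij]kl}=A^{ij[kl]}=0$, give $A^{jilk}=A^{ijkl}$, whence $\Delta_{nm}=\Delta_{mn}$. This step uses only the minor symmetries.

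Next I would verify the first inverse $N^{ijlk}=\epsilon^{ikm}\epsilon^{jln}\Delta_{mn}$. Substituting the definition of $\Delta$ and contracting each $\epsilon$-pair with the identity $\epsilon^{ikm}\epsilon_{acm}=\delta^i_a\delta^k_c-\delta^i_c\delta^k_a$ produces the four terms $\frac14\bigl(A^{ijkl}-A^{ilkj}-A^{kjil}+A^{klij}\bigr)$. The major symmetry then collapses $A^{klij}=A^{ijkl}$ and $A^{kjil}=A^{ilkj}$, leaving $\frac12\bigl(A^{ijkl}-A^{ilkj}\bigr)$; rewriting this with the right minor symmetry as $\frac12\bigl(A^{ijlk}-A^{iljk}\bigr)=A^{i[jl]k}$ identifies it with $N^{ijlk}$ via Proposition 9. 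The second inverse, $A^{ijkl}=\frac43\,\epsilon^{i(k|m}\epsilon^{j|l)n}\Delta_{mn}$, then follows immediately by symmetrizing the first relation over $k,l$: since $\epsilon^{i(k|m}\epsilon^{j|l)n}\Delta_{mn}=N^{ij(kl)}$, the coefficient $\frac43$ is exactly the one delivered by the inverse relation $A^{ijkl}=\frac43\,N^{ij(kl)}$ of Proposition 9.

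The step needing the most care, and where I expect the main obstacle, is the recovery of the normalization $\frac43$. If one bypasses Proposition 9 and instead expands $\frac43\,\epsilon^{i(k|m}\epsilon^{j|l)n}\Delta_{mn}$ directly, the $\epsilon$-$\delta$ contraction yields only the combination $\frac13\bigl(2A^{ijkl}-A^{ilkj}-A^{iklj}\bigr)$; collapsing this back to $A^{ijkl}$ is impossible using the minor and major symmetries alone and genuinely requires the additional symmetry $A^{i(jkl)}=0$ of Proposition 6, in the form $A^{iklj}+A^{iljk}=-A^{ijkl}$. Recognizing that this trace-type identity is precisely what closes the calculation is the crux; the remaining index bookkeeping is routine.
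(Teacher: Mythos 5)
Your proof is correct and follows essentially the same route as the paper: the symmetry $\Delta_{[mn]}=0$ via dummy relabelling and the minor symmetries, and both inverse formulas obtained by substituting the definition of $\Delta$ and invoking the relations $N^{ijkl}=A^{i[jk]l}$ and $A^{ijkl}=\frac43 N^{ij(kl)}$ of Proposition 9. Your closing remark is also accurate and merely makes explicit what the paper leaves implicit --- the normalization $\frac43$ ultimately rests on the identity $A^{i(jkl)}=0$ of Proposition 6, which is exactly what the paper's proof of Proposition 9 uses.
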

\begin{proof}
The symmetry of $ \Delta_{mn}$ can be readily established:
\begin{eqnarray}\label{DeltaSym}
  \Delta_{[mn]}=\frac 18\left(\epsilon_{mik}\epsilon_{njl}A^{ijkl}-
    \epsilon_{nik}\epsilon_{mjl}A^{ijkl}\right)
  =\frac 18\epsilon_{mik}\epsilon_{njl}\!\left(A^{ijkl}-
    A^{jilk}\right)=0\,.\end{eqnarray}
Eq.~(\ref{Delta-1})$_1$ can be derived by substituting
(\ref{Delta}) into its right-hand side and taking care of
(\ref{decomp2}). Eq.~(\ref{Delta-1})$_2$ then follows by applying 
(\ref{decomp4}).
\end{proof}} The symmetric 2nd rank tensor $\Delta_{mn}$ (differing by
a factor 2) was introduced by Hauss\"uhl (1983, 2007), its relation to
the irreducible piece $A$ was found by Hehl \& Itin (2002).
By means of (\ref{A=C-S}), $\Delta_{mn}$ can be calculated directly
from the undecomposed elasticity tensor:
\begin{equation}\label{DeltaC}
  \Delta_{mn}=\frac 14\epsilon_{mik}\epsilon_{njl}\,C^{ijkl}\,.
\end{equation}

The tensor $M$, like $N$, can also be expressed in terms of
irreducible pieces: We substitute (\ref{decomp2}) into
(\ref{decomp1}),
\begin{eqnarray}\label{decomp5}
 C^{ijkl}=S^{ijkl}+A^{ijkl}=M^{ijkl}+A^{i[jk]l}\,,
\end{eqnarray}
and resolve it with respect to $M$:
\begin{eqnarray}\label{decomp6}
  M^{ijkl}&=&S^{ijkl}+A^{ijkl}-A^{i[jk]l}\cr
  &=&S^{ijkl}+ {\scriptstyle  \frac 12}\left(2A^{ijkl}-A^{ijkl}+A^{ikjl} \right)\,.
\end{eqnarray}
{We collect the terms and find
\begin{prop}
  The auxiliary quantity $M^{ijkl}$ can be expressed in terms of the
  irreducible elasticities as follows:
\begin{eqnarray}\label{decomp7}
M^{ijkl}=S^{ijkl}+A^{i(jk)l}\,.
\end{eqnarray}
\end{prop}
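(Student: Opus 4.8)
The plan is to eliminate $N$ from the two decompositions of $C$ collected in (\ref{decomp1}) by invoking Proposition 9. I would begin from the equality $S^{ijkl}+A^{ijkl}=M^{ijkl}+N^{ijkl}$ and substitute $N^{ijkl}=A^{i[jk]l}$, as permitted by (\ref{decomp2}); this is exactly the content of (\ref{decomp5}). Solving the resulting identity for $M$ gives $M^{ijkl}=S^{ijkl}+A^{ijkl}-A^{i[jk]l}$, which is the first line of (\ref{decomp6}).

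The remaining step is purely a matter of (anti)symmetrization in the middle index pair $jk$. Every tensor splits into its symmetric and antisymmetric parts in a chosen pair, so in particular $A^{ijkl}=A^{i(jk)l}+A^{i[jk]l}$. Subtracting off the antisymmetric part therefore leaves exactly the symmetric part, namely $A^{ijkl}-A^{i[jk]l}=A^{i(jk)l}$. Feeding this back yields the asserted formula $M^{ijkl}=S^{ijkl}+A^{i(jk)l}$. Equivalently, one can expand $A^{i[jk]l}=\frac12\left(A^{ijkl}-A^{ikjl}\right)$ explicitly and collect terms, which is the manipulation carried out between the two lines of (\ref{decomp6}).

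I do not expect any genuine obstacle here: the whole argument is a linear rearrangement whose only nontrivial input is the already-established Proposition 9, together with the trivial symmetric/antisymmetric split of $A$ in its middle indices. As an independent check I would also derive the statement directly from the definition $M^{ijkl}=C^{i(jk)l}$ of (\ref{MNdef}): symmetrizing the irreducible decomposition $C^{ijkl}=S^{ijkl}+A^{ijkl}$ over the pair $jk$ gives $C^{i(jk)l}=S^{i(jk)l}+A^{i(jk)l}$, and since $S$ is totally symmetric one has $S^{i(jk)l}=S^{ijkl}$. This second route reaches the same conclusion without even passing through $N$, and thus reassuringly confirms the result while explaining why it holds so cleanly.
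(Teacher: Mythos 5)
Your proposal is correct and follows essentially the same route as the paper: the paper likewise substitutes $N^{ijkl}=A^{i[jk]l}$ from Proposition 9 into the double decomposition \eqref{decomp1} (its Eq.~\eqref{decomp5}), solves for $M^{ijkl}=S^{ijkl}+A^{ijkl}-A^{i[jk]l}$, and collects terms via the symmetric/antisymmetric split of $A$ in the middle index pair, exactly as you do. Your supplementary check---symmetrizing $C^{ijkl}=S^{ijkl}+A^{ijkl}$ over $jk$ in the definition $M^{ijkl}=C^{i(jk)l}$ and using the total symmetry of $S$---is a valid and even more direct one-line derivation that bypasses Proposition 9 entirely, though the paper does not use it.
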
}
According to Proposition 3, the reducible parts $M$ and $N$ don't obey
the left and the right minor symmetries. Consequently---in contrast to
$S$ and $A$, which both obey the major {\it and} the minor
symmetries---$M$ and $N$ cannot be interpreted as directly observable
elasticity tensors. Therefore, in all physical applications we are
forced to use eventually the irreducible pieces $S$ and $A$. The
reducible parts $M$ and $N$ are no full-fledged substitutes for them
and can at most been used for book keeping.

\section{Physical applications of the irreducible decomposition}

\subsection{{Cauchy relations, two kinds of elasticity}}

{Having the $S\!A$-decomposition at our disposal, it is
  clear that we can now classify elastic materials. The anisotropic
  material with the highest symmetry is that for which the $A$
  elasticity vanishes:
\begin{equation}\label{A=0}
  A^{ijkl}=0\qquad\text{or}\qquad \Delta_{mn}=0\qquad\text{or}\qquad N^{ijkl}=0\,.
\end{equation}
The last equation, in accordance with the definition (\ref{MNdef}) of $N$,
can also be rewritten as
\begin{equation}\label{cauchy1a}
 C^{ijkl}=C^{ikjl}\,.
\end{equation}
These are the so-called Cauchy relations, for their history, see
Todhunter (1960)\cite{Todhunter}. The representation in
(\ref{cauchy1a}) is widely used in elasticity literature, see, for
example, Hauss\"uhl (1983)\cite{Haus1983}, Cowin
(1989)\cite{Cowin1989}, Cowin \& Mehrabadi (1992)\cite{Cowin1992},
Campanella \& Tonon (1994)\cite{Campanella}. Podio-Guidugli
(2000)\cite{Podio-Guidugli}, Weiner (2002)\cite{Weiner}, and Hehl \&
Itin (2002)\cite{Cauchy}.

In Voigt's notation, we can use $A^{IJ}=0$ in (\ref{asym-voigt}) and
find the following form of the Cauchy relations,
\begin{eqnarray}\label{Cauchy1}
  &&C^{12}=C^{66}\,,\qquad\qquad C^{13}=C^{55}\,,\qquad\qquad
  C^{14}=C^{56}\,,\nonumber\\
  && C^{23}=C^{44}\,,\qquad\qquad C^{25}=C^{46}\,,\qquad\qquad C^{36}=C^{45}\,,
\end{eqnarray}
see Love (1927)\cite{Love} and Voigt (1928)\cite{Voigt}. Of course,
the same result can also be read off from (\ref{asym-alter}) for
$N=0$.}

Let us first notice that for most materials the Cauchy relations do
not hold even approximately. In fact, the elasticity of a generic
anisotropic material is described by the whole set of the 21
independent components $C^{IJ}$ and not by a restricted set of 15
independent components obeying the Cauchy relations, see Hauss\"uhl
(2007)\cite{Haus}. This fact seems to nullify the importance of the
Cauchy relations for modern solid state theory and leave them only as
historical artifact.

However, a lattice-theoretical approach to the elastic constants
shows, see Leibfried (1962)\cite{Leibfried}, that the Cauchy relations
are valid provided (i) the interaction forces between the atoms or
molecules of a crystal are central forces, as, for instance, in rock
salt, (ii) each atom or molecule is a center of symmetry, and (iii)
the interaction forces between the building blocks of a crystal can be
well approximated by a harmonic potential{, see also Perrin
  (1979)\cite{Perrin}}. In most elastic bodies this is not fulfilled
at all, see the detailed discussion in Hauss\"uhl
(2007)\cite{Haus}. Accordingly, a study of the {\it violations} of the
Cauchy relations yields important information about the intermolecular
forces of elastic bodies. One should look for the {\it deviation of
  the elasticity tensor from its Cauchy part.} { Recently,
  Elcoro \& Etxebarria (2011)\cite{Elcoro} pointed out that the
  situation is more complex than thought previously, for details we
  refer to their article.}

  This deviation measure, being a macroscopic characteristic of the
  material, delivers important information about the microscopic
  structure of the material. { It must be defined in terms
    of a unique proper decomposition of the elasticity
    tensor. Apparently,} the tensor $N^{ijkl}$ cannot serve as a such
  deviation, {in contrast to the stipulations of
    Podio-Guidugli (2000), for example, because it is not an
    elasticity tensor and} because its co-partner $M^{ijkl}$ has 21
  components, {that is, as many} as the elasticity tensor
  $C^{ijkl}$ itself. Only when $N^{ijkl}=0$ is assumed, the tensor
  $M^{ijkl}$ is restricted to 15 independent components. The problem
  of the identification of the deviation part is solved when the
  irreducible decomposition is { used. In this case, we
    can define the main or Cauchy part, as given by the tensor
    $S^{ijkl}$ with 15 independent components, and the deviation or
    non-Cauchy part, presented by the tensor $A^{ijkl}$ with 6
    independent components.}

We can view the elasticity tensor as being composed of two independent
parts, $S$ and $A$.  Due to the irreducible decomposition the set of
elastic constants can be separated into two subsets which are
components of two independent tensors, $C=S\oplus A$. In this way, the
completely symmetric Cauchy part of the elasticity tensor $S^{ijkl}$
has an independent meaning. The additional part $A^{ijkl}$ can be
referred than as as a non-Cauchy part. Why do pure {\it Cauchy
  materials} { ($S\ne0,\,A=0$)} and pure {\it non-Cauchy
  materials} { ($S=0,\,A\ne0$)} not exist in nature?  Can
such pure types of materials be designed artificially? Or does some
principal fact forbid the existence of pure Cauchy and pure non-Cauchy
materials? These questions seem to be important for elasticity theory
and even for modern material technology.  Note that in the framework
of the reducible $M\!N$-decomposition such questions cannot even be
raised. It is because the reducible $M$ and $N$ parts themselves
cannot serve as independent elasticity tensors.  We will address
subsequently the reason why pure non-Cauchy materials are forbidden.

\subsection{Elastic energy}

{
In linear elasticity, using the generalized Hooke law,
  the elastic energy is given by
\begin{equation}\label{energy1} 
  W=\frac 12  \,\sigma^{ij}\varepsilon_{ij}\
  =\frac 12 C^{ijkl}  \ve_{ij}\ve_{kl}\,.
\end{equation}
Because of the irreducible decomposition $C=S+A$, we can split this
energy in a Cauchy part and a non-Cauchy part:
\begin{equation}\label{energy2} 
  W  =\, ^{(\rm C)}\!W+{} ^{\rm  (nC)}\!W\,, \qquad\text{with}\qquad
  ^{(\rm C)}\!W := \frac 12 S^{ijkl} \varepsilon_{ij} \varepsilon_{kl}\,,
  \> ^{\rm (nC)}\!W := \frac 12 A^{ijkl} \varepsilon_{ij} \varepsilon_{kl}\,. 
\end{equation}
As we have seen in the Sec.~3.1, this splitting makes perfectly good
sense in physics, since $S$ as well as $A$, due to their symmetries
and dimensions, are elasticity tensors themselves.

Since the strain $\varepsilon_{ij}$ can be expressed in terms of the
displacement gradients according to $\varepsilon_{ij}
=2\partial_{(i}u_{j)}$, we can do the analogous for the elastic
energy:
\begin{equation}\label{energy3} W=\frac 12 C^{ijkl}
  \ve_{ij}\ve_{kl} =2C^{ijkl}\partial_{(i}u_{j)}\partial_{(k}u_{l)}
  =2C^{ijkl}\partial_{i}u_{j}\partial_{k}u_{l}\,.
\end{equation}
We can drop both pairs of parentheses $_{()}$ because the corresponding
symmetries are imprinted already in the elasticity tensor.

\subsection{Null Lagrangians in linear elasticity?}

We would like now to discuss a proposal on null Lagrangians by
Podio-Guidugli (2000)\cite{Podio-Guidugli}, see also literature cited
by him. We substitute the decomposition (\ref{MNdef}) into
(\ref{energy3}):
\begin{equation}\label{energy4}
  W=2C^{ijkl}\partial_{i}u_{j}\partial_{k}u_{l}=
  2M^{ijkl}\partial_{i}u_{j}\partial_{k}u_{l}+
  2N^{ijkl}\partial_{i}u_{j}\partial_{k}u_{l}\,.
\end{equation}
We turn our attention now to the last term and to the antisymmetry of
$N$, namely to $N^{i(jk)l}=0$. There is a subtlety involved. We want
to integrate partially in the last term. For exploiting the
antisymmetry of $N$ in $j$ and $k$, we need the partial differentials
$\partial_j$ and $\partial_k$. Therefore, we use the left minor
symmetry of $C$ and rewrite (\ref{energy4}) as
\begin{equation}\label{energy5}
  W=2C^{ijkl}\partial_{j}u_{i}\partial_{k}u_{l}=
  2M^{ijkl}\partial_{j}u_{i}\partial_{k}u_{l}+
  2N^{ijkl}\partial_{j}u_{i}\partial_{k}u_{l}\,.
\end{equation}
Of course, the sum of the two terms has not changed, but each single
term did change since $M$ and $N$ do {\it not} obey the left minor
symmetry.  Now we can partially integrate in the last term:
\begin{equation}\label{energy6*}
 2N^{ijkl}\partial_{j}u_{i}\partial_{k}u_{l}
= 2N^{ijkl}\left[\partial_{j}(
      u_i\partial_k u_l ) -u_i\partial_j\partial_k
    u_l\right]\,.
\end{equation}
Since  $N^{i(jk)l}=0$ and  $\partial_{[j}\partial_{k]}=0$, the last
term drops out and we are left with 
\begin{eqnarray}\label{energy6}
  W=2M^{ijkl}\partial_{j}u_{i}\partial_{k}u_{l}+2N^{ijkl}\partial_{j}(
      u_i\partial_k u_l )\,.
\end{eqnarray}

The} second $N$-term is a total derivative, as was already shown by
Lancia et al.(1995)\cite{Lancia}. Thus, only the $M$-term is involved
in the variational principle for deriving the equations of
motion. This result was interpreted by Lancia et
al.(1995)\cite{Lancia} { as solving the null Lagrangian
  problem for linear elasticity theory, since the additional
  $M$-tensor is the only one that is involved in the equilibrium
  equation. However,} this statement does not have an invariant
meaning. Indeed, the remaining part with the $M$-tensor has exactly
the same set of 21 independent components as the initial $C$-tensor,
see Proposition 5. Moreover, this tensor allows a successive
decomposition which can include additional null Lagrangian terms.

{We find from (\ref{energy6}), using the Propositions
  9 and 11 and subsequently Proposition 6,
\begin{eqnarray}\label{energy7}
 W=\left(2S^{ijkl}-A^{iljk}\right)\partial_{j}u_{i}\partial_{k}u_{l}
+2 A^{i[jk]l}\partial_{j}\left(
    u_i\partial_k u_l  \right)\,.
\end{eqnarray}
Consequently} the $A$-tensor is included in the total derivative term,
which vanishes if the Cauchy relations hold. However the same
$A$-tensor appears together with the $S$-tensor also in the first part
of the energy functional.

For static configurations, (\ref{energy1}) can play a role of a
Lagrangian functional whose variation with respect to the displacement
field generates the equilibrium equation. The null-Lagrangian is
defined as that part of the strain energy functional that does not
contribute to the equilibrium equation. The problem is to identify the
null-Lagrangian part of the elasticity Lagrangian and consequently to
establish which set of the independent components of the elasticity
tensor contributes to the equilibrium equation.

{
As a cross-check for our considerations, we determine the
  equilibrium conditions for the Lagrangian (\ref{energy7}). Up to
a total derivative term, the variation of the Lagrangian (\ref{energy7})
reads
 \begin{equation}\label{lag5}
   \delta W=\left(2S^{ijkl}-A^{iljk}\right)
   \left({\partial_j  u_i}{\partial_k \delta u_l}+{\partial_j \delta
       u_i}{\partial_k  u_l}\right)\,.
\end{equation}
Since the minor and the major symmetries hold for the
$S$ and $A$ tensors, the last two terms can be summed up:
 \begin{equation}\label{lag5x}
   \delta W=2\left(2S^{ijkl}-A^{iljk}\right){\partial_j \delta
       u_i}{\partial_k  u_l}\,.
\end{equation}
We integrate partially,
 \begin{equation}\label{lag6}
   \delta W=2\left(2S^{ijkl}-A^{iljk}\right){\partial_j }\left(\delta u_i
     {\partial_k u_l}\right)-2\left(2S^{ijkl}-A^{iljk}\right) 
   \delta u_i{\partial_j\partial_k u_l}\,,
\end{equation}
and can read off the equilibrium conditions as
\begin{equation}\label{lag7}
  \left(2S^{ijkl}-A^{iljk}\right) {\partial_j\partial_k u_l}=0\,.
\end{equation}

Only at a first glance, this equation seems to be new.
Indeed, we can rewrite it by using the Propositions 6 and
  11,
\begin{equation}\label{lagr6a}
  \left(2S^{ijkl}+A^{ijkl}+A^{iklj}\right)
  {\partial_j\partial_k u_l}
  = 2\left(S^{ijkl}+A^{i(jk)l}\right)
  {\partial_j\partial_k u_l}
= 2M^{ijkl}{\partial_j\partial_k u_l}=0\,,
\end{equation}
Since $M^{ijkl}=C^{i(jk)l}$, we can substitute it and, because of 
$\partial_{[j}\partial_{k]}=0$, we have the  standard equilibrium equation
\begin{equation}\label{lagr7a}
C^{ijkl} {\partial_j\partial_k u_l}=0\,.
\end{equation}}
Our calculations confirms the following:

\begin{prop} {\it Any total derivative term in the elastic energy
    functional can be regarded as only a formal expression. It does
    not remove any subset of the elastic constants from the
    equilibrium equation, that is, for an arbitrary anisotropic
    material an elastic null Lagrangian does not exist. }
\end{prop}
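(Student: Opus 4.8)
The plan is to show that any claimed "null Lagrangian'' term extracted from the elastic energy still depends on the full set of $21$ independent elastic constants, so that it cannot genuinely decouple any subset of them from the equilibrium equation. The cleanest route is to compute the equilibrium (Euler--Lagrange) equation directly from the energy and verify that it reproduces the standard equation $C^{ijkl}\partial_j\partial_k u_l=0$, in which \emph{every} component of $C$ participates.

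\medskip

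First I would take the energy in the form \eqref{energy3}, $W=2C^{ijkl}\partial_i u_j\partial_k u_l$, and carry out the variation $\delta W$ with respect to the displacement field $u_l$. Using the left minor symmetry of $C$ to place the derivatives on the second index of each factor (as in the passage from \eqref{energy4} to \eqref{energy5}), and the major symmetry to combine the two resulting terms, the variation reduces to a single expression of the form $\delta W = 4\,C^{ijkl}\,\partial_j\delta u_i\,\partial_k u_l$. A single partial integration then moves the derivative off $\delta u_i$ and isolates a total-derivative (boundary) term together with a bulk term proportional to $\delta u_i$; reading off the coefficient of $\delta u_i$ gives the equilibrium equation. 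The key algebraic point is that the symmetrization $\partial_{[j}\partial_{k]}=0$ forces only the part of $C$ symmetric in $j,k$, namely $M^{ijkl}=C^{i(jk)l}$, to survive---exactly the step already executed in \eqref{lagr6a}--\eqref{lagr7a}.

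\medskip

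Next I would make the decisive observation, invoking Proposition~5: the surviving tensor $M^{ijkl}$ has $\dim\mathcal M=21$ independent components, the \emph{same} number as $C$ itself. Therefore no subset of the elastic constants has been removed from the equation of motion by discarding the total-derivative term. To make this airtight I would trace through the $S\!A$-language using Propositions 6, 9, and 11: the candidate ``null'' contribution $2A^{i[jk]l}\partial_j(u_i\partial_k u_l)$ in \eqref{energy7} is indeed a pure divergence, yet the very same $A$-tensor reappears in the bulk coefficient $2S^{ijkl}-A^{iljk}$, so that upon symmetrizing in $j,k$ the full $M=S+A^{i(jk)l}$ is recovered. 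This demonstrates that separating off a total derivative is merely a cosmetic rewriting and does not correspond to any invariant splitting of the constant space.

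\medskip

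The main obstacle, and the point requiring care, is conceptual rather than computational: one must argue that a genuine elastic null Lagrangian would have to remove some invariant subspace of $\mathcal C$ from the equilibrium equation, and then show this is impossible. The subtlety is that the $M\!N$- and $S\!A$-rewritings \emph{appear} to isolate $N$ (or $A$) into a divergence, tempting one to declare that piece a null Lagrangian; the resolution is that $M$ (respectively $S$ together with $A^{i(jk)l}$) still carries all $21$ constants, because $A$ is not annihilated but only reshuffled by the index symmetrization $\partial_{[j}\partial_{k]}=0$. Once the equilibrium equation is shown to be identically \eqref{lagr7a}, depending on the full $C^{ijkl}$, the non-existence of an elastic null Lagrangian for a generic anisotropic medium follows, which is precisely the content of the Proposition.
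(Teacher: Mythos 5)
Your proposal is correct and follows essentially the same route as the paper: you derive the equilibrium equation by direct variation and recover the full $C^{ijkl}\partial_j\partial_k u_l=0$ via $\partial_{[j}\partial_{k]}=0$ and $M^{ijkl}=C^{i(jk)l}$, then invoke the $\dim\mathcal{M}=21$ count (Proposition 5) and the reappearance of $A$ in the bulk coefficient $2S^{ijkl}-A^{iljk}$ of (\ref{energy7}), exactly the ingredients of the paper's argument. The only difference is expository ordering---you lead with the variation, whereas the paper presents the partial-integration analysis first and uses the variation as a cross-check---so there is nothing to correct.
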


\subsection{Wave equation}

{The wave propagation in linear elasticity for} anisotropic
  media is described by the following equation:
\begin{equation}\label{wave-eq}
{  \rho g^{il}\ddot{u_l}-C^{ijkl}\,{\partial_j\partial_k u_l}=0\,.}
\end{equation}
Here the { displacement} covector $u_l$ is assumed to be a
function of the time coordinate and of the position of a point in the
{medium.} All other coefficients, the mass density $\rho$,
the elasticity tensor $C^{ijkl}$, and the metric tensor $g^{il}$, are
assumed to be constant; {moreover, we use Cartesian
  coordinates, that is, the Euclidean metric reads
  $g^{ij}=\text{diag}(1,1,1)$.}

We {make} a {plane} wave ansatz, with the
notation as in Nayfeh (1985)\cite{Nayfeh}:
\begin{equation}\label{wave-an}
u_i=U_ie^{i\left(\zeta n_jx^j-\omega t\right)}\,.
\end{equation}
Here $U_i$ is {the covector of} a complex constant
amplitude, $\zeta$ the wave-number, $n_j$ the propagation unit
covector, $\omega$ the {angular} frequency, and $i^2=-1$.
Substituting (\ref{wave-an}) into (\ref{wave-eq}), we obtain a system
of three homogeneous {algebraic} equations
\begin{equation}\label{wave-an1}
\left(\rho\, \omega^2g^{il}-C^{ijkl}\zeta^2n_jn_k\right)U_l=0\,.
\end{equation}
It has a non-trivial solution if and only if
 the characteristic equation holds,
\begin{equation}\label{char}
\det\left(\rho \omega^2g^{il}-C^{ijkl}\zeta^2n_jn_k\right)=0\,.
\end{equation}

{With the definitions of} the Christoffel tensor
\begin{equation}\label{Christoffel}
\Gamma^{il}:=\frac 1\rho\, C^{ijkl}n_jn_k\,
\end{equation}
and of the phase velocity $v:=\omega/\zeta$, the system
(\ref{wave-an1}) takes the form
\begin{equation}\label{char1}
\left( v^2g^{il}-\Gamma^{il}\right)U_l=0\,,
\end{equation}
{whereas} the characteristic equation  reads
\begin{equation}\label{char2}
\det\left( v^2g^{il}-\Gamma^{il}\right)=0\,.
\end{equation}
Due to the {minor and major} symmetries of the elasticity
tensor, the Christoffel tensor {turns out to be}
symmetric
\begin{equation}\label{Christoffel1}
{\Gamma^{[ij]}=0\,.}
\end{equation}

For {a} symmetric matrix, the characteristic equation has
only real solutions.  Every positive real solution corresponds to an
acoustic wave propagating in the direction of the wave covector $n_i$.
Thus, in general, for a given propagation covector $n_i$, three
different acoustic waves {are possible}. The cases with
zero solutions (null modes) and negative solutions (standing waves)
must be treated as unphysical because they do not satisfy the
causality requirements.

A symmetric tensor, by itself, cannot {be} decomposed
{directly} under the action of the group $GL(3,\mathbb R)$. However,
the irreducible decomposition of the elasticity tensor generates the
corresponding {\it decomposition of the Christoffel tensor}.
Substituting the irreducible $S\!A$-decomposition into
(\ref{Christoffel}), we obtain
\begin{equation}\label{Chr-dec1}
\Gamma^{il}=\S^{il}+\A^{il}\,,
\end{equation}
where
\begin{equation}\label{Chr-dec2}
  \S^{il}:=S^{ijkl}n_jn_k{=\S^{li}}\,,\qquad {\rm and }\qquad
  \A^{il}=A^{ijkl}n_jn_k{=   \A^{li} }                \,.
\end{equation}
These two symmetric tensors correspond to the Cauchy and non-Cauchy
parts of the elasticity tensor. We will call $\S^{ij}$ {\it Cauchy
  Christoffel tensor} and $\A^{ij}$ {\it non-Cauchy Christoffel
  tensor.}  {Substituting} (\ref{firsty}) and
(\ref{firsty+}) into (\ref{Chr-dec2}), we find the explicit
expressions
\begin{equation}\label{Chr-dec3}
\S^{il}=\frac 1{3\rho}(C^{ijkl}+C^{iklj}+ C^{iljk})n_jn_k\,
\end{equation}
and
\begin{equation}\label{Chr-dec4}
\A^{il}=\frac1{3\rho}\left(2C^{ijkl}-C^{iklj}-C^{ilkj}\right)n_jn_k\,.
\end{equation}
{ If we transvect $n_l$ with $\A^{il}$, we} observe a
generic fact:
\begin{prop}
  For {each} elasticity tensor $C^{ijkl}$ and for
  {each} wave covector $n_i$,
\begin{equation}\label{prop1}
\A^{il}n_l=0\,.
\end{equation}
\end{prop}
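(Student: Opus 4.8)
The plan is to reduce the contraction $\A^{il}n_l$ to the totally symmetrized part of $A$ in its last three indices and then to invoke Proposition 6. First I would substitute the definition (\ref{Chr-dec2}) of the non-Cauchy Christoffel tensor, obtaining $\A^{il}n_l = A^{ijkl}n_jn_kn_l$. The key observation is that the triple product $n_jn_kn_l$ is completely symmetric under every permutation of the indices $j,k,l$; consequently, when it is transvected with $A^{ijkl}$, only the component of $A$ that is symmetric in $j,k,l$ contributes. Thus $\A^{il}n_l = A^{i(jkl)}n_jn_kn_l$.

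The second step is simply to appeal to the additional symmetry $A^{i(jkl)}=0$ established in (\ref{A-ident}) of Proposition 6. Inserting this into the expression above yields $\A^{il}n_l=0$, which is exactly the claim.

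I do not expect any genuine obstacle here: the statement is an almost immediate corollary of the identity $A^{i(jkl)}=0$, and the only point deserving care is the bookkeeping---recognizing that the three covectors $n_j,n_k,n_l$ saturate precisely the three indices $j,k,l$ over which $A$ is symmetrized to zero. As an alternative, purely computational check, one can instead start from the explicit form (\ref{Chr-dec4}), contract with $n_l$ to obtain $\frac{1}{3\rho}\left(2C^{ijkl}-C^{iklj}-C^{ilkj}\right)n_jn_kn_l$, and then relabel the fully saturated dummy indices $j,k,l$ to verify that the three $C$-terms cancel. Either route delivers the result with no essential difficulty.
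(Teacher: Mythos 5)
Your proposal is correct and follows essentially the same route as the paper's own proof: the paper also writes $\A^{il}n_l=A^{ijkl}n_jn_kn_l=A^{i(jkl)}n_jn_kn_l=0$, invoking the identity $A^{i(jkl)}=0$ of Proposition 6. Your additional computational cross-check via (\ref{Chr-dec4}) is a harmless variant of the same argument, so there is nothing to add.
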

\begin{proof}
  {We have
    $\A^{il}n_l=A^{ijkl}n_jn_kn_l=A^{i(jkl)}n_jn_kn_l
\stackrel{(37)}{=}0$. }
\end{proof}
\begin{prop}
The determinant of the non-Cauchy  Christoffel tensor is equal to zero,
\begin{equation}\label{detA}
\det(\A^{ij})=0\,.
\end{equation}
\end{prop}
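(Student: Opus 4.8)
The plan is to read the result off directly from Proposition 13, which already contains all the substance. I would regard the symmetric array $\A^{il}$ as an ordinary $3\times 3$ matrix and interpret the transvection $\A^{il}n_l$ as the action of this matrix on the column vector formed from the components of the unit covector $n_l$. With that viewpoint fixed, the statement becomes a one-line consequence of standard linear algebra.

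Proposition 13 asserts precisely that $\A^{il}n_l=0$. Since $n_l$ is a \emph{unit} covector, it is nonzero, so this identity exhibits $n_l$ as a nontrivial element of the kernel of the matrix $\A^{il}$. A $3\times 3$ matrix that annihilates a nonzero vector cannot have full rank; equivalently, its columns are linearly dependent, so its determinant vanishes. Hence $\det(\A^{il})=0$, which is the claim.

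The hard part is essentially nil, because the real work was done in establishing Proposition 13: there the extra symmetry $A^{i(jkl)}=0$ of Proposition 6 forces the symmetrized transvection with three copies of $n$ to vanish, and this is exactly what places $n_l$ in the kernel of $\A^{il}$. The only point requiring a moment's care is the index bookkeeping, namely confirming that the index contracted in $\A^{il}n_l$ coincides with the index treated as the ``column'' index of the matrix whose determinant we form. This causes no difficulty, since $\A^{il}$ is symmetric and the determinant is in any case invariant under transposition.

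It is worth noting, as a forward-looking remark, that this argument does more than yield a vanishing determinant: it shows that the non-Cauchy Christoffel tensor $\A^{il}$ always has rank at most two and that the propagation direction $n_l$ lies in its kernel. This anticipates the later observation that longitudinal wave propagation is governed entirely by the Cauchy part $\S^{il}$ of the Christoffel tensor, the non-Cauchy part contributing nothing along $n_l$.
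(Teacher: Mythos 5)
Your proof is correct and follows essentially the same route as the paper, which likewise deduces singularity of $\A^{il}$ directly from the relation $\A^{il}n_l=0$ of Proposition 13 (phrased there as a linear dependence among the rows of the matrix, which by symmetry is the same as your kernel statement). Your added remark that $n_l$ spans part of the kernel, so that $\operatorname{rank}(\A^{il})\le 2$, is a correct and harmless strengthening consistent with the paper's later use of the Cauchy part for longitudinal propagation.
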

\begin{proof}
  Equation (\ref{prop1}) can be viewed as a linear relation between
  the {rows} of the matrix $\A^{il}$; {this}
  proves its singularity.
\end{proof}

The wave equation (\ref{char1}) {can be} rewritten as
\begin{equation}\label{char1x}
\left( v^2g^{il}-\S^{il}-\A^{il}\right)U_l=0\,,
\end{equation}
with the characteristic equation
\begin{equation}\label{char3x}
\det\left( v^2g^{il}-\S^{il}-\A^{il}\right)=0\,.
\end{equation}
We can now recognize  {the} reason why a pure non-Cauchy
{medium} is forbidden. In the case $\S^{il}=0$, the
characteristic equation (\ref{char3x}) takes the form
\begin{equation}\label{char4x}
\det\left( v^2g^{il}-\A^{il}\right)=0\,.
\end{equation}
Since $\det (\A^{il})=0$, at least one of its eigenvalues is
zero. Such a null-mode wave is forbidden {because of}
causality {reasons}.

The Christoffel tensor is real and symmetric, thus all its eigenvalues
are real and the associated eigenvectors are orthogonal.  In order to
have three real positive eigenvalues, we need to satisfy the condition
of positive definiteness of the matrix $\Gamma^{ij}$.  In this case,
the following possibilities arise: 
{\begin{itemize}
  \item[{ {(i)}}] All the eigenvalues are distinct
    \hspace{20pt}$v_1^2>v_2^2>v_3^2\,,$
  \item[{(ii)}] two eigenvalues coincide
    \hspace{55pt}$v_1^2>v_2^2=v_3^2\>\;$ or $\>v_1^2=v_2^2>v_3^2\,,\;$
    or
  \item[{(iii)}] three eigenvalues coincide
    \hspace{48pt}$v_1^2=v_2^2=v_3^2\,.$
\end{itemize}}
The Christoffel matrix depends on the propagation vector
$\Gamma^{ij}({\vec{n}})$, thus the conditions indicated above
determine the directions in which the wave can propagate.

\subsection{Polarizations}

Equation (\ref{char1}), or its decomposed form (\ref{char1x}),
represents acoustic wave propagation in an elastic
{medium}. It is an eigenvector problem {in}
which eigenvalues $v^2$ are the solutions of (\ref{char3x}). In
general, three distinct real positive solutions correspond to three
independent waves $^{(1)}U_l$, $^{(2)}U_l$, and $^{(3)}U_l$, called
acoustic polarizations, see Nayfeh (1985)\cite{Nayfeh}.

For isotropic materials, there are three pure polarizations: one {\it
  longitudinal {or compression} wave} with
\begin{equation}\label{long}
 \vec{U}\times\vec{n}=0\,,
\end{equation}
that is, the polarization is directed along the propagation vector,
and two {\it {transverse or} shear waves with}
\begin{equation}\label{shear}
	  \vec{U}\cdot\vec{n}=0\,,
\end{equation}
that is, the polarization is normal to the the propagation vector. In
general, for anisotropic materials, three pure modes do not exist. The
identification of the pure modes and the condition for their existence
is an interesting problem.

Let us see how the irreducible decomposition of the elasticity tensor,
which we applied to the  Christoffel tensor, can be used here. For a
chosen direction vector $\vec{n}$, we {introduce a vector
  and a scalar according to}
\begin{equation}\label{vec-scal}
  S^i:=\Gamma^{ij}n_j\,,\qquad S:=\Gamma^{ij}n_in_j\,.
\end{equation}
{This notation is consistent since $S^i$ and $S$,} due to 
(\ref{prop1}), depend only on the Cauchy part of the elasticity tensor, 
\begin{equation}\label{vec-scal1}
S^i=S^{ij}n_j\,,\qquad S=S^{ij}n_in_j\,.
\end{equation}
\begin{prop}
  Let $n^i$ denotes {an allowed direction for the
    propagation of a compression wave.} Then the velocity
  $v_{\rm{L}}$ of { this wave} in the direction of $n^i$
    is determined only by the Cauchy part of the elasticity tensor:
\begin{equation}\label{long-vel}
v_{\rm{L}}=\sqrt S\,.
\end{equation}
\end{prop}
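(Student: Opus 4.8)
The plan is to exploit the defining property of a compression wave. By (\ref{long}) a longitudinal (compression) wave is one whose polarization covector is proportional to the propagation covector, $U_l = U\,n_l$ with $U \neq 0$. Saying that $n^i$ is an \emph{allowed} direction for such a wave means precisely that $n_l$ is an eigen-covector of the Christoffel operator, i.e.\ that the wave equation (\ref{char1}) admits the solution $U_l \propto n_l$ for some eigenvalue $v^2$. First I would insert this ansatz into (\ref{char1}) and read off the resulting eigenvalue relation.

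Substituting $U_l = U\,n_l$ into $\left( v^2 g^{il}-\Gamma^{il}\right)U_l = 0$ and dividing by $U$ gives
\[
\Gamma^{il}n_l = v_{\rm L}^2\, g^{il}n_l = v_{\rm L}^2\, n^i \,,
\]
so that, in the notation of (\ref{vec-scal}), the vector $S^i := \Gamma^{il}n_l$ equals $v_{\rm L}^2\, n^i$. The next step is to transvect this relation with the unit covector $n_i$. Using the normalization $n^i n_i = 1$ of the propagation covector, the left-hand side becomes the scalar $S = \Gamma^{il}n_i n_l$ of (\ref{vec-scal}), while the right-hand side becomes $v_{\rm L}^2$. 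Hence $v_{\rm L}^2 = S$, that is, $v_{\rm L} = \sqrt{S}$.

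It then remains to argue that $S$ is governed \emph{only} by the Cauchy part of $C$. This is immediate from the Christoffel decomposition (\ref{Chr-dec1}) together with Proposition 13: since $\A^{il}n_l = 0$, we obtain
\[
S = \Gamma^{il}n_i n_l = \left(\S^{il}+\A^{il}\right)n_i n_l = \S^{il}n_i n_l \,,
\]
which is precisely the scalar built from the Cauchy Christoffel tensor in (\ref{vec-scal1}) and, via (\ref{Chr-dec2}), depends on $S^{ijkl}$ alone. This completes the argument for $v_{\rm L} = \sqrt{S}$.

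The only genuinely delicate point is conceptual rather than computational: one must justify that the phrase ``$n^i$ is an allowed compression direction'' is logically equivalent to ``$n_l$ is an eigenvector of $\Gamma^{il}$,'' and that the associated eigenvalue is indeed the physical $v_{\rm L}^2$ entering the characteristic equation (\ref{char2}). Once this identification is granted, the remaining work is a one-line contraction, and the appeal to Proposition 13 makes the reduction to the Cauchy part $\S$ transparent.
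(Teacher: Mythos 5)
Your proof is correct and follows essentially the same route as the paper: substitute the longitudinal ansatz $U_l\propto n_l$ into (\ref{char1}), transvect with $n_i$ to obtain $v_{\rm L}^2=S$, and invoke Proposition 13 (via (\ref{vec-scal1})) to see that $S=\S^{il}n_in_l$ depends on the Cauchy part alone. Your added remark making explicit that ``allowed compression direction'' means $n_l$ is an eigen-covector of $\Gamma^{il}$ is a fair clarification of what the paper leaves implicit, but it does not change the argument.
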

\begin{proof}
For the longitudinal wave, $u_j=\alpha n_j$. Thus, (\ref{char1})
 {becomes} 
\begin{equation}\label{char1-x}
  \left( v^2g^{ij}-\Gamma^{ij}\right)n_j=0\qquad\text{or}\qquad
  v^2g^{ij}n_j=S^{ij}n_j\,.
\end{equation}
{Transvecting both sides of the last equation with} $n_i$,
we obtain (\ref{long-vel}).
\end{proof}
Let us now discuss in which directions the three pure polarizations
can propagate.
 \begin{prop}
   For a {medium} with a given elasticity tensor, all
   three pure{ly}  polarized waves (one longitudinal and two
   {transverse}) can propagate in the direction
   {${\vec{n}}$} if and only if
\begin{equation}\label{long-cond}
S^i= Sn^i\,.
\end{equation}
\end{prop}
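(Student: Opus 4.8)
The plan is to recognize that the pure polarizations are precisely the eigenvectors of the real, symmetric Christoffel tensor $\Gamma^{ij}$ of (\ref{char1}), and that the stated condition is nothing but the requirement that $\vec{n}$ itself be one of them. A longitudinal mode is an eigenvector proportional to $n^i$, a transverse mode is an eigenvector lying in the plane orthogonal to $\vec{n}$. Because $\Gamma^{ij}$ is symmetric (eq.~(\ref{Christoffel1})), it always possesses an orthonormal eigenbasis, so the whole question reduces to whether that eigenbasis can be arranged with one leg along $\vec{n}$ and the other two in the orthogonal plane. I would also note at the outset that, since $\A^{ij}n_j=0$ by Proposition 13, one has $S^i=\Gamma^{ij}n_j=\S^{ij}n_j$, so the condition (\ref{long-cond}) is in fact a statement about the Cauchy Christoffel tensor $\S$ alone.

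For necessity, I would assume that all three pure modes propagate; in particular the longitudinal mode $U_l=\alpha n_l$ solves (\ref{char1}), which gives $\Gamma^{ij}n_j=v_{\rm L}^2\,n^i$. Since $\Gamma^{ij}n_j=S^i$ by definition (\ref{vec-scal}), this reads $S^i=v_{\rm L}^2\,n^i$. Transvecting with $n_i$ and using $n^in_i=1$ together with $S=\Gamma^{ij}n_in_j$ then fixes $v_{\rm L}^2=S$ (in agreement with Proposition 15), whence $S^i=Sn^i$. Only the longitudinal mode is needed here; the transverse ones impose no further constraint.

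For sufficiency, I would start from $S^i=Sn^i$, that is $\Gamma^{ij}n_j=Sn^i$, which exhibits $\vec{n}$ as an eigenvector of $\Gamma$ with eigenvalue $S$, i.e.\ the longitudinal mode. To produce the two transverse modes I would consider the two-dimensional subspace $V:=\{U_l : U_ln^l=0\}$ and show that it is $\Gamma$-invariant: for $U\in V$ one computes $(\Gamma^{ij}U_j)n_i=U_j\,\Gamma^{ij}n_i=S\,U_jn^j=0$, where the symmetry $\Gamma^{ij}=\Gamma^{ji}$ and the eigen-relation are used. Hence $\Gamma$ restricts to a symmetric operator on $V$, which therefore has two orthonormal eigenvectors lying inside $V$; these are the two transverse polarizations, and together with the longitudinal one all three pure modes propagate along $\vec{n}$.

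The main obstacle is exactly this last step of the sufficiency argument, namely the invariance of the orthogonal complement $V$ under $\Gamma$. It is here that the symmetry of the Christoffel tensor is indispensable: without $\Gamma^{ij}=\Gamma^{ji}$, the fact that $\vec{n}$ is an eigenvector would not force $V$ to be invariant, and the two remaining eigenvectors could fail to be purely transverse, so that the equivalence would break down.
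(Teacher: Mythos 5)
Your proof is correct and follows essentially the same route as the paper: both identify pure polarizations with eigenvectors of the symmetric Christoffel tensor, reduce the condition to $\vec{n}$ being an eigenvector via $\Gamma^{ij}n_j=S^i$ and $v_{\rm L}^2=S$, and use $\mathcal{A}^{il}n_l=0$ to see that only the Cauchy part enters. Your explicit verification that the orthogonal complement of $\vec{n}$ is $\Gamma$-invariant merely fills in a step the paper's proof leaves implicit in the phrase ``if and only if one of these eigenvectors points in the direction of $\vec{n}$''.
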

\begin{proof}
  Since the Christoffel matrix is symmetric, it has real eigenvalues
  and three orthogonal eigenvectors. We have three pure polarizations
  if and only if one of {these} eigenvectors
  {points} in the direction {of
    ${\vec{n}}$}. Consequently,
\begin{equation}\label{long-cond1}
\Gamma^{ij}n_j= v^2_{\text{L}} n^i\,.
\end{equation}
Since $\Gamma^{ij}n_j=S^{ij}n_j=S^i$, {we have
  $S^i=v^2_{\text{L}} n^i$ or, after} substituting (\ref{long-vel}),
we obtain (\ref{long-cond}).
\end{proof} {Accordingly,} for a given
{medium}, the directions of the pure{ly}
polarized waves depend on the Cauchy part of the elasticity tensor
{alone}. In other words, two materials, with the same
Cauchy parts $S^{ijkl}$ {of the elasticity tensor} but
different non-Cauchy parts $A^{ijkl}$, have the same pure wave
propagation directions and the same longitudinal velocity.

\section{Examples}
\subsection{Isotropic media}

In order to clarify the results discussed above, consider as an
example an isotropic elastic medium. Then, the elasticity tensor can
be expressed in terms of the metric tensor $g^{ij}$ as
\begin{equation}\label{iso}
  C^{ijkl}=\lambda\,g^{ij}g^{kl}+\mu\left(g^{ik}g^{lj}+g^{il}g^{jk}
  \right)=\lambda\,g^{ij}g^{kl} +2\mu\,g^{i(k}g^{l)j}\,,
\end{equation}
with the Lam\'e moduli $\lambda$ and $\mu$, see Marsden \& Hughes
(1983)\cite{Marsden}.  The irreducible decomposition of this
expression involves the terms
\begin{equation}\label{first''}
  S^{ijkl}= (\lambda+2\mu)\,g^{(ij}g^{kl)}=
  \frac{\lambda+2\mu}{3}\left(g^{ij}g^{kl}+
    g^{ik}g^{lj}+g^{il}g^{jk}\right)\,,
\end{equation}
and
\begin{equation}\label{second'}
  A^{ijkl}= \frac{\lambda-\mu}{3}\left(2g^{ij}g^{kl}-
     g^{ik}g^{lj}-g^{il}g^{jk}\right)\,.
\end{equation}
The symmetric second rank tensor $\Delta$ of equation (\ref{Delta}),
which is equivalent to $A$, reads
\begin{equation}\label{Delta+}
  \Delta_{mn}=\frac 14\epsilon_{mik}\epsilon_{njl}A^{ijkl}
= \frac{\lambda-\mu}{12}\epsilon_{mik}\epsilon_{njl} \left(2g^{ij}g^{kl}-
     g^{ik}g^{lj}-g^{il}g^{jk}\right)\,
\end{equation}
or
\begin{equation}\label{piso*8}
 \Delta_{ij}= \frac{\lambda-\mu}{2}\,g_{ij}\,.
\end{equation}
Of course, (\ref{piso*8}) is much more compact than (\ref{second'}).

Accordingly, the irreducible decomposition reveals two fundamental
combinations of the isotropic elasticity constants. We denote them by
\begin{equation}\label{irr-cons}
\a:=\frac{\lambda+2\mu}3\,,\quad \b:=\frac{\lambda-\mu}3\,.
\end{equation}
Then we can rewrite the canonical representation (\ref{iso}) of the elasticity
tensor as
\begin{eqnarray}\label{is-can}
  C^{ijkl}&=&\a\left(g^{ij}g^{kl}+g^{ik}g^{lj}+g^{il}g^{jk}\right)+\b\left(2g^{ij}g^{kl}-
    g^{ik}g^{lj}-g^{il}g^{jk}\right)\nonumber\\
  &=&(\a+2\b)\,g^{ij}g^{kl} +2(\a-\b)\,g^{i(k}g^{l)j}\,.
\end{eqnarray}

The alternative $M\!N$-decomposition can also be derived
straightforwardly. We can read off from (\ref{iso}) directly that 
\begin{equation}\label{inver2}
  M^{ijkl}=C^{i(jk)l}=(\lambda+\mu)g^{i(j}g^{k)l}+\mu
  g^{il}g^{jk}\,.
\end{equation}
In a similar way we can find for $N^{ijkl}$ the expression
\begin{equation}\label{inver1}
N^{ijkl}=C^{i[jk]l}=(\lambda-\mu)\,g^{i[j}g^{k]l}\,.
\end{equation}
The key fact is here that this decomposition is characterized by {\it
  three parameters}
\begin{equation}\label{red-cons}
\a'=\frac {\lambda+\mu}2\,,\quad \b'=\mu\,,\quad \g'=\frac {\lambda-\mu}2\,.
\end{equation}
In both  decompositions,  the Cauchy relations are described by the equation
 \begin{equation}\label{iso-cauchy}
 \lambda=\mu\,.
\end{equation}
Only in the special case when the Cauchy relations hold, the tensors
$S^{ijkl}$ and $M^{ijkl}$ are equal.

Consider now the energy functional and the problem of the
identification of the null Lagrangian. When the $S\!A$-decomposition
is substituted into the energy functional, we remain with two
irreducible parts with their leading coefficients $\a$ and $\b$.
Thus, the number of parameters is not reduced.

As for the $M\!N$-decomposition, the $N$ term, with its leading
coefficient $\gamma'$, being a total derivative, does not contribute
to the equilibrium equation. We still remain with two terms with their
leading parameters $\a'$ and $\b'$. Moreover, as it was shown above,
these terms can be recovered in the initial functional. It proves that
there is no such thing as a null Lagrangian in elasticity, not even
in the simplest isotropic case.

In this isotropic case, the main difference between the $S\!A$- and
the $M\!N$-decomposition becomes manifest. The irreducible
$S\!A$-decomposition dictates the existence of two independent
fundamental parameters of the isotropic medium, whereas the reducible
$M\!N$-decomposition deals with three {\it linearly dependent}
parameters.

Let us calculate now the characteristic velocities of the acoustic waves.
Two two Christoffel matrices are
\begin{equation}\label{iso-chris}
{\cal S}^{ij}=\a\left(g^{ij}+2n^in^j\right)\,.
\end{equation}
and
\begin{equation}\label{iso-chris1}
{\cal A}^{ij}=\b\left(g^{ij}-n^in^j\right)\,.
\end{equation}
Hence, 
\begin{equation}\label{iso-chris2}
S^i=3\a n^i\,,\qquad S=3\a\,.
\end{equation}
Observe that the condition (\ref{long-cond}) of pure polarization is
satisfied now identically. Thus, we recover the well known fact that
in isotropic media every direction allows propagation of purely
polarized waves.

In terms of the moduli $\a$ and $\b$, the characteristic equation for
the acoustic waves takes the form
\begin{equation}\label{iso-char}
\det\left[(v^2-\a+\b)g^{ij}-(2\a+\b)n^in^j\right]=0\,.
\end{equation}
The longitudinal wave velocity 
\begin{equation}\label{iso-sol2}
v_1^2=S=3\a=\lambda-2\mu\,,
\end{equation}
is one solution for this equation. Indeed, in this case, the
characteristic equation (\ref{iso-char}) turns into the identity
$\det(g^{ij}-n^in^j)=0$, which proves that (\ref{iso-sol2}) is an
eigenvalue.

We immediately find additional solutions for this equation. For
\begin{equation}\label{iso-sol1}
v_2^2=\a-\b=\mu\,,
\end{equation}
(\ref{iso-char}) turns into the identity $\det(n^in^j)=0$. Moreover,
the rank of this matrix is equal to one: ${\rm
  rank}(n^in^j)=1$. Consequently, the multiplicity of the root
(\ref{iso-sol1}) is equal to two and the third eigenvalue is the same
one:
 \begin{equation}\label{iso-sol1*}
v_3^2=\a-\b=\mu\,.
\end{equation}

With the use of the irreducible decomposition, we can answer now the
following question: {\it Does there exist an anisotropic medium in
  which purely polarized waves can propagate in every direction?}  It
is clear that media with the same Cauchy part of the elasticity tensor
(and arbitrary non-Cauchy part) have the same property in this
respect. As a consequence we have

\begin{prop}
  The most general type of an anisotropic medium that allows
  propagation of purely polarized waves in an arbitrary direction has
  an elasticity tensor of the form
\begin{equation}\label{iso-gen}
C^{ijkl}=
\begin{bmatrix}
  \a & {\a/3+ 2\rho_1} &{\a/3+ 2\rho_2} &2\rho_3 & 0 & 0 \\
* & \a &{\a/3+ 2\rho_4} & 0 &2\rho_5& 0 \\
* & * & \a  & 0 & 0 & 2\rho_6 \\
* & * & * & \a/3-\rho_4 &-\rho_6 & -\rho_5 \\
* & * & * & * & \a/3-\rho_2 & -\rho_3 \\
* & * & * & * & * &\a/3 -\rho_1 \end{bmatrix}\,,
\end{equation}
where $\rho_1,\cdots,\rho_6$ are arbitrary parameters. The velocity
of the longitudinal waves in this medium is
$v_{\rm {L}}=\sqrt{3\alpha} =\sqrt{ \lambda-2\mu} $.
\end{prop}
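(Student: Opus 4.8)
The plan is to reduce the statement to a constraint on the Cauchy part $S^{ijkl}$ alone and then identify that constraint with isotropy of $S$. By Proposition 16 purely polarized waves propagate in a direction $\vec n$ if and only if $S^i=S\,n^i$, and by Proposition 15 together with (\ref{vec-scal1}) and (\ref{prop1}) both $S^i=S^{ijkl}n_jn_kn_l$ and $S=S^{ijkl}n_in_jn_kn_l$ are built solely from the completely symmetric (Cauchy) piece $S^{ijkl}$. Hence the requirement that pure polarizations exist for \emph{every} $\vec n$ becomes the single tensorial condition
\begin{equation*}
 S^{ijkl}n_jn_kn_l=\big(S^{mpqr}n_mn_pn_qn_r\big)\,n^i\qquad\text{for all }n_i,\quad n^i:=g^{ij}n_j,
\end{equation*}
which involves $S$ but leaves the non-Cauchy part $A^{ijkl}$ entirely unconstrained. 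This already explains why the six parameters $\rho_1,\dots,\rho_6$, which by (\ref{asym-voigt}) and (\ref{decx1*}) parametrize $A^{IJ}$, remain free.

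The core step is to show that the displayed condition forces $S^{ijkl}$ to be the isotropic tensor $\alpha\,(g^{ij}g^{kl}+g^{ik}g^{jl}+g^{il}g^{jk})$ of (\ref{first''}). First I would introduce the quartic form $\phi(n):=S^{ijkl}n_in_jn_kn_l=S$. Because $S^{ijkl}$ is totally symmetric, differentiation gives $\partial\phi/\partial n_i=4\,S^{ijkl}n_jn_kn_l=4\,S^i$, so the condition $S^i=S\,n^i$ says precisely that the gradient $\partial\phi/\partial n_i$ is radial (parallel to $n^i$) at every point. A radial gradient means $\phi$ has vanishing tangential derivative on each sphere $|n|=\text{const}$ and, these spheres being connected, is constant on them; being homogeneous of degree four it must then equal $\phi(n)=3\alpha\,(g^{ij}n_in_j)^2$ for a single constant $\alpha$. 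Reading off the totally symmetric tensor behind this quartic yields $S^{ijkl}=\alpha\,(g^{ij}g^{kl}+g^{ik}g^{jl}+g^{il}g^{jk})$, i.e.\ the Cauchy part must coincide with that of an isotropic medium. The converse (an isotropic $S$ gives $S^i=3\alpha\,n^i=S\,n^i$) is the direct computation already implicit in (\ref{iso-chris2}).

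It remains to assemble the elasticity tensor and read off the velocity. With $S^{ijkl}$ now isotropic and $A^{ijkl}$ arbitrary, I would compute $C^{IJ}=S^{IJ}+A^{IJ}$ from (\ref{decx1*}), using (\ref{sym-voigt}) for the isotropic $S$ (which fixes the diagonal and the symmetric off-diagonal entries through the single modulus $\alpha$) and (\ref{asym-voigt}) for $A$ (whose six independent components are renamed $\rho_1,\dots,\rho_6$). Matching the entries reproduces the matrix (\ref{iso-gen}). Finally, since $\phi=S=3\alpha$ on unit covectors, Proposition 15 gives $v_{\rm L}=\sqrt S=\sqrt{3\alpha}$, as claimed.

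The step I expect to be the main obstacle is the core identification: turning the pointwise parallelism $S^i=S\,n^i$ into the algebraic statement that $S^{ijkl}$ is isotropic. The gradient/quartic-form device above makes this transparent, but one must check carefully that ``constant on spheres plus homogeneity'' genuinely pins down $\phi$ up to the single scalar $(g^{ij}n_in_j)^2$. Should that route feel too analytic, an equivalent purely algebraic argument is to expand the $15$-dimensional space of totally symmetric $S^{ijkl}$ into its harmonic $SO(3)$ pieces ($9\oplus5\oplus1$) and verify that the parallelism condition annihilates everything but the one-dimensional scalar (trace) piece, which is exactly the isotropic tensor.
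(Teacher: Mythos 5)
Your argument is correct, and it is in fact more complete than the paper's own treatment of this proposition. The paper never really proves the ``most general'' claim: it checks in (\ref{iso-chris2}) that an isotropic Cauchy part satisfies the pure-polarization condition (\ref{long-cond}) identically, invokes Propositions 15 and 16 to conclude that the non-Cauchy part $A^{ijkl}$ may be varied freely without affecting the pure-mode directions, and then writes down the matrix (\ref{iso-gen}) with only the remark ``it is clear that'' --- i.e., sufficiency is argued, while necessity (that the Cauchy part \emph{must} be isotropic) is merely asserted. Your gradient device supplies exactly this missing half: since $S^{ijkl}$ is totally symmetric, $\partial\phi/\partial n_i=4S^i$ for $\phi=S^{ijkl}n_in_jn_kn_l$, so $S^i=Sn^i$ on the whole unit sphere forces $\phi$ to be constant there, homogeneity then gives $\phi=c\,(g^{ij}n_in_j)^2$, and the polarization identity (a totally symmetric tensor is uniquely determined by its quartic form) yields $S^{ijkl}=\alpha\left(g^{ij}g^{kl}+g^{ik}g^{jl}+g^{il}g^{jk}\right)$; your alternative harmonic $9\oplus 5\oplus 1$ route is equally sound and matches the finer $SO(3)$ decomposition the paper alludes to in Sec.~2.5. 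One caveat concerns your final matching step: with your normalization of $S$, Voigt's notation gives $S^{11}=3\alpha$ and $S^{12}=S^{66}=\alpha$, so the assembled matrix has diagonal entries $3\alpha$ and off-diagonal entries $\alpha+2\rho_1$, etc., whereas (\ref{iso-gen}) as printed has $\alpha$ and $\alpha/3+2\rho_1$; with the printed entries one finds $S=\alpha$ and hence $v_{\rm L}=\sqrt{\alpha}$, so the displayed matrix and the claimed $v_{\rm L}=\sqrt{3\alpha}$ cannot both hold. This inconsistency is inherited from the paper itself (which also misprints $\lambda+2\mu$ as $\lambda-2\mu$ in (\ref{iso-sol2}) and in the proposition), but your write-up asserts both conventions at once; you should fix one normalization explicitly and carry it through the matrix and the velocity formula.
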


\subsection{Cubic media}
Cubic crystals are described by three independent elasticity
constants. In a properly chosen coordinate system, they can be put,
see Nayfeh (1985)\cite{Nayfeh}, into the following Voigt matrix:
\begin{equation}\label{cub-voigt}
\begin{bmatrix}
C^{1111} & C^{1122} & C^{1133} & C^{1123} & C^{1131} & C^{1112} \\
* & C^{2222} & C^{2233} & C^{2223} & C^{2231} & C^{2212} \\
* & * & C^{3333} & C^{3323} & C^{3331} & C^{3312} \\
* & * & * & C^{2323} & C^{2331} & C^{2312} \\
* & * & * & * & C^{3131} & C^{3112} \\
* & * & * & * & * & C^{1212}
     \end{bmatrix} \equiv \begin{bmatrix}
  C^{11} & C^{12} & C^{12} & 0 & 0 & 0 \\
* & C^{11} & C^{12} & 0 & 0 & 0 \\
* & * & C^{11} & 0 & 0 & 0 \\
* & * & * & C^{66} & 0 & 0 \\
* & * & * & * & C^{66} & 0 \\
* & * & * & * & * & C^{66} \end{bmatrix}\,.
\end{equation}
We decompose it irreducibly by using (\ref{sym-voigt}) and
(\ref{asym-voigt}) and find the Cauchy part
\begin{equation}\label{cub-sym}
S^{ijkl}=\begin{bmatrix}
\tilde{\a} & \tilde{\b} & \tilde{\b} & 0 & 0 & 0 \\
* & \tilde{\a} & \tilde{\b} &  0 & 0 & 0 \\
* & * & \tilde{\a} & 0 & 0 & 0 \\
* & * & * & \tilde{\b} & 0 & 0 \\
* & * & * & * & \tilde{\b} & 0 \\
* & * & * & * & * & \tilde{\b}
     \end{bmatrix} \,,\qquad \tilde{\a}=C^{11},\quad \tilde{\b}=\frac 13 \left(C^{12}+2C^{66}\right)\,,
\end{equation}
and the non-Cauchy part
\begin{equation}\label{cub-ant}
  A^{ijkl}= \begin{bmatrix}
    0 & 2\tilde{\g} & 2\tilde{\g} & 0 & 0 & 0 \\
    * & 0 & 2\tilde{\g} & 0 & 0 & 0 \\
    * & * & 0 & 0 & 0 & 0 \\
    * & * & * & -\tilde{\g} & 0 & 0 \\
    * & * & * & * & -\tilde{\g} & 0 \\
    * & * & * & * & * &-\tilde{\g} \end{bmatrix} \,,
  \qquad \tilde{\g}=\frac 13 \left(C^{12}-C^{66}\right)\,. 
\end{equation}
Accordingly, the elasticity tensor is expressed in terms of the three
new elastic constants $\tilde{\a}$, $\tilde{\b}$, and
$\tilde{\g}$. For the Cauchy relation we have
\begin{equation}\label{cub-Cauchy}
\tilde{\g}=0\,,\qquad{\rm{or}}\qquad C^{12}=C^{66}\,.
\end{equation}

The Cauchy part of the Christoffel tenor takes the form
\begin{equation}\label{cub-Cau-Christof}
  S^{il}=\begin{bmatrix}
    \tilde{\a} n_1^2+\tilde{\b}(n_2^2+n_3^2) & 2\tilde{\b} n_1n_2 
    & 2\tilde{\b} n_1n_3  \\
    2\tilde{\b} n_1n_2 &\tilde{\a} n_2^2+\tilde{\b}(n_1^2+n_3^2) 
    & 2\tilde{\b} n_2n_3  \\
    2\tilde{\b} n_1n_3  & 2\tilde{\b} n_2n_3  & \tilde{\a} n_3^2
    +\tilde{\b}(n_1^2+n_2^2) 
     \end{bmatrix} \,.
\end{equation}
The corresponding vector and scalar invariants (\ref{vec-scal1}) read
\begin{equation}\label{vec-scal1*}
  S^i=\begin{bmatrix}
    n_1\big(\tilde{\a} n_1^2+3\tilde{\b}(n_2^2+n_3^2)\big) \\ \\
    n_2\big(\tilde{\a} n_2^2+3\tilde{\b}(n_1^2+n_3^2)\big)  \\ \\
    n_3\big(\tilde{\a} n_3^2+3\tilde{\b}(n_1^2+n_2^2)\big)
  \end{bmatrix}=\begin{bmatrix}
    n_1\big((\tilde{\a}-3\tilde{\b}) n_1^2+3\b\big) \\ \\
    n_2\big((\tilde{\a}-3\tilde{\b}) n_2^2+3\tilde{\b}\big)  \\ \\
    n_3\big((\tilde{\a}-3\tilde{\b}) n_3^2+3\tilde{\b}\big)
     \end{bmatrix}
\end{equation}
and
\begin{equation}\label{scal}
  S=(\tilde{\a}-3\tilde{\b}) (n_1^4+n_2^4+n_3^4)+3\tilde{\b}\,,
\end{equation}
respectively. The non-Cauchy part of the Christoffel tensor turns out
to be
\begin{equation}\label{cub-nonCau-Christof}
  A^{il}=\tilde{\g}
  \begin{bmatrix}
    -n_2^2-n_3^2 & n_1n_2 & n_1n_3  \\
    n_1n_2 &-n_1^2-n_3^2   &  n_2n_3  \\
    n_1n_3 & n_2n_3  &-n_1^2-n_2^2 
  \end{bmatrix} \,.
\end{equation}
The corresponding invariants $A^{ij}n_j$ and $A^{ij}n_in_j$ are zero.

Let us use the pure polarization condition $S^i=Sn^i$ in order to
derive the purely polarized propagation directions. We can easily derive
all solutions of this equation
\begin{itemize}
\item{Edges:}
$n_1=1\,, n_2=n_3=0\,, {\rm etc.}$
The longitudinal velocity  is 
 \begin{equation}\label{edges-vel}
v_{\text{L}}=\sqrt{\tilde{\a}}=\sqrt{C^{11}}\,.
\end{equation}
\item{Face diagonals:} $n_1= n_2= \frac 12 {\sqrt{2}}\,, n_3=0\,, {\rm
    etc.}$ The longitudinal velocity is
 \begin{equation}\label{edges-vel1}
   v_{\text{L}}=\frac 12 \sqrt{{2(\tilde{\a}+3\tilde{\b})}}
=\frac 12 \sqrt{2({C^{11}+C^{12}+2C^{66}})}\,.
\end{equation}

\item{Space diagonals:} $n_1= n_2=n_3=\frac 13 {\sqrt{3}}\,, {\rm etc.}$
  The longitudinal velocity is
 \begin{equation}\label{edges-vel1*}
v_{\text{L}}=\frac 13 \sqrt{3( {\tilde{\a}+6\tilde{\b}})}
=\frac 13 \sqrt{3({C^{11}+2C^{12}+4C^{66}})}\,.
\end{equation}
\end{itemize}
Due to Propositions 15 and 16, a wide class of materials with the same
$S$-tensor and an arbitrary $A$-tensor will have exactly the same
directions and the same velocities of the longitudinal waves.  The
elasticity tensor for such materials can be written as
 \begin{equation}\label{cub-gen}
C^{ijkl}=
\begin{bmatrix}
  \tilde{\a} & {\tilde{\b}+ 2\rho_1} &{\tilde{\b}+ 2\rho_2} &2\rho_3 & 0 & 0 \\
* & \tilde{\a} &{\tilde{\b}+ 2\rho_4} & 0 &2\rho_5& 0 \\
* & * & \tilde{\a}  & 0 & 0 & 2\rho_6 \\
* & * & * & \tilde{\b}-\rho_4 &-\rho_6 & -\rho_5 \\
* & * & * & * & \tilde{\b}-\rho_2 & -\rho_3 \\
* & * & * & * & * &\tilde{\b} -\rho_1 \end{bmatrix}\,.
\end{equation}

\section*{Acknowledgments}
This work was supported by the German-Israeli Foundation for
Scientific Research and Development (GIF), Research Grant No.\
1078-107.14/2009. Y.I.\ would like to thank V.I.~Alshits (Moscow) for
helpful discussion.

\appendix
\section{Irreducible decomposition of tensors of rank  $p$}

In order to understand what should be considered as the proper
decomposition of a tensor, which is of rank $p$ in its covariant or
contravariant indices, that is, $T^{ijkl}$ or $T_{ijkl}$, we must look
closer on its precise algebraic meaning. Let us start with a vector
space $V$ over the number field $F$. This construction comes together
with the group of general linear transformations $GL(\dim\!
V,F)$. This group includes all invertible square matrices of the size
$\dim V\times\dim V$ with the entries in $F$. In our case, $F={\mathbb
  R}$ and $V={\mathbb R}^3$. Consequently, we are dealing with the
group $GL(3,\mathbb R)$ that can be considered as a group of
transformations between two bases $\{e_i\}$ and $\{e_{i'}\}$ of $V$:
\begin{equation}\label{basis}
e_{i'}=L_{i'}{}^ie_i\qquad L_{i'}{}^i\in GL(3,\mathbb R)\,.
\end{equation}
The vector space $V$ cannot be decomposed invariantly into a sum of
subspaces. Indeed, every two nonzero vectors of $V$ can be transformed
each other by the use of certain matrix of $GL(3,\mathbb R)$. Thus,
the group $GL(3,\mathbb R)$ acts transitively on $V$.
 
A tensor of rank $p$ is defined as a multi-linear map from the
Cartesian product of $p$ copies of $V$ into the field $F$,
\begin{equation}\label{tensor}
T:\underbrace{V\times \cdots\times V}_{p}\to F\,.
\end{equation}
The set of all tensors $T$ of the rank $p$ compose a vector space by
itself, say ${\mathcal T}$. The dimension of this {\it tensor space}
is equal to $n^p$. Thus, for rank 4 in 3d we have $3^4=81$.  As a
basis in ${\mathcal T}$, we can take tensor products of basis elements
in $V$,
\begin{equation}\label{tensor-bas}
e_{i_1}\otimes e_{i_2}\otimes\cdots\otimes e_{i_p}\,.
\end{equation}
Accordingly, an arbitrary contravariant tensor of rank $p$ can be
expressed as
\begin{equation}\label{tensor-ex}
T=T^{{i_1} {i_2} \cdots {i_p}}\,e_{i_1}\otimes e_{i_2}\otimes\cdots\otimes e_{i_p}\,.
\end{equation}
Under a transformation (\ref{basis}) of a basis of the vector space,
the basis of the tensor space is multiplied by a product of the
matrices $L_{i'}{}^i$. This is a `derived transformation' in the sense
of Littlewood (1944)\cite{little}. An important fact is that the
tensor space ${\mathcal T}$ is decomposed to a direct sum of subspaces
which are invariant under general linear transformations.  For
instance, the span of the basis
\begin{equation}\label{tensor-bas1x}
  e_{(i_1}\otimes e_{i_2)}\otimes\cdots\otimes e_{i_p}\, 
\end{equation}
describes the subspace of all tensors which are symmetric under the
permutation of the two first indices.  Another subspace is obtained as
the span of all basis tensors of the form
\begin{equation}\label{tensor-bas2x}
  e_{[i_1}\otimes e_{i_2]}  \otimes\cdots\otimes e_{i_p}\,,
\end{equation}
which is the subspace of all tensors antisymmetric under the
permutation of two first indices.

By taking different permutations of the basis vectors we can obtain
different subtensors (elements of the subspace of ${\mathcal T}$) of a
tensor of arbitrary rank. There are all together $p!$ permutations for
the system of $p$ objects.  In fact, {\it every $GL(3,\mathbb R)$-invariant
  subtensor} can be obtained by the operator which permutes the
indices, see Littlewood (1944)\cite{little}.  Since we need the sums of
all the tensors with permuted indices, we must extend the group of
permutations to its group algebra (Frobenius algebra). This
construction involves formal sums, in an addition to the group
multiplication.

In this way, each tensor of rank two or greater can be {\it
  irreducibly decomposed} under the action of the linear group
$GL(3,\mathbb R)$. The original tensor is written as a linear
combination of simpler tensors of rank $p$, which, under the action of
$GL(3,\mathbb R)$, transform only under themselves.  These partial
tensors obey their specific symmetries {\it in addition} to the
symmetries of the initial tensor. Possible types of the irreducible
tensors are determined by the use of Young's tableaux.

\subsection{Example: Young's decomposition of second rank tensors}

Consider a generic second rank tensor $T_{ij}$ which
is decomposed into a sum of its symmetric and antisymmetric parts,
\begin{equation}\label{irr}
  T^{ij}=T^{(ij)}+T^{[ij]}\,.
\end{equation} 
It means that one starts with the tensor space 
\begin{equation}\label{tensor-bas0}
{\mathcal T}={\rm Span}\left\{
e_{i}\otimes e_{j}
\right\}
\,
\end{equation}
The span of the linear combination of basis tensors  
\begin{equation}\label{tensor-bas1}
{\mathcal S}={\rm Span}\{(e_{(i}\otimes e_{j)}\}\,
\end{equation}
composes the subspace of tensors symmetric under the permutation of
two indices.  Indeed, an arbitrary tensor in $ {\mathcal S}$ is
decomposed as
\begin{equation}\label{tensor-bas1-1}
S=S^{ij}e_{(i}\otimes e_{j)}\,.
\end{equation}
Thus we have $S^{[ij]}=0$.  Another subspace ${\mathcal A}$ is
obtained as a span of the linear combination
\begin{equation}\label{tensor-bas2}
{\mathcal A}={\rm Span}\{e_{[i}\otimes e_{j]}\}\,.
\end{equation}
This is a subspace of antisymmetric tensors,
 \begin{equation}\label{tensor-bas1-2}
A=A^{ij}e_{[i}\otimes e_{j]}\,.
\end{equation}
Here $A^{(ij)}=0$.  In this way, the tensor space is represented as a
direct sum of its subspaces ${\mathcal T}={\mathcal S}\oplus{\mathcal
  A}$. These subspaces are invariant under the transformations of
$GL(3,{\mathbb R})$ while further decomposition into smaller subspaces
is impossible. Hence we have an irreducible decomposition.

In Young's description, this decomposition is given by two diagrams
that are graphical representations of the permutation group $S_2$
\begin{equation}\label{irr1}
\Yvcentermath1\yng(1)\otimes\yng(1)=\yng(2)\oplus \yng(1,1)\,.
\end{equation}
The left-hand side here denotes the tensor product of two vectors,
i.e., a generic asymmetric second order tensor. The right-hand side is
given as a sum of two second order tensors. The symmetric tensor is
represented by the row diagram while the antisymmetric tensor is
given by the column diagram. In our simplest example, only two tables
given in (\ref{irr1}) are allowed.  The next step is to fill in the
tables with the different indices of the tensor.  In particular, we
write
\begin{equation}\label{irr2}
\Yvcentermath1\young(ij)=\Big(I+(ij)\Big)A_{ij}=A_{ij}+A_{ji}\,,
\end{equation}
and
\begin{equation}\label{irr2x}
 \Yvcentermath1\young(i,j)=\Big(I-(ij)\Big)A_{ij}=A_{ij}-A_{ji}\,.
\end{equation}
Here $I$ denotes the identity operator, $(ij)$ is the permutation
operator, $\big(I+(ij)\big)$ and $\big(I-(ij)\big)$ are respectively
Young's symmetrizer and antisymmetrizer operators.  Thus, we have
\begin{equation}\label{irr1x}
  \Yvcentermath1\young(i)\otimes\young(j)=\a\,\young(ij)\oplus \b\,
\young(i,j)\,.
\end{equation}
The decomposition (\ref{irr}) is then obtained by inserting suitable
leading coefficients $\a$ and $\b$. In general, these coefficients are
calculated by using a combinatorial formula. It is easy to see that,
for the completely symmetric and completely antisymmetric diagrams of
$n$ cells, the coefficients are equal to $1/n!$. Thus, the coefficients
in (\ref{irr1x}) are equal to $1/2$ and the decomposition (\ref{irr})
is recovered.

\subsection{Young's decomposition of fourth rank tensors}

The first step is to construct Young's tableaux corresponding to a
fourth rank tensor.
\begin{rul}
  The four cells representing the indices of the tensor must be glued
  into tables of all possible shapes.  The only restriction is that
  the number of cells in any row must be less or equal to the number
  of cells in the previous row. The number of irreducible subtensors
  of a given tensor is equal to the number of the tables of all
  possible shapes.
\end{rul}
\noindent Due to this rule, a generic fourth order tensor can be
irreducibly decomposed into the sum of five independent parts. These
parts are described by the following Young's diagrams, which are the
graphical representation of the permutation group $S_4$,
{see Boerner (1970)\cite{Boerner} or} Hamermesh
  (1989)\cite{hamermesh},
\begin{equation}\label{irr3}
  \Yvcentermath1\yng(1)\otimes\yng(1)\otimes\yng(1)\otimes\yng(1)
  =\yng(4)\oplus \yng(3,1) \oplus \yng(2,2)\oplus \yng(2,1,1)
  \oplus \yng(1,1,1,1)\,.
\end{equation}
The left-hand side describes a generic fourth order tensor.  On the
right-hand side, the first diagram represents the completely symmetric
tensor. The middle diagrams are for the tensors which are partially
symmetric and partially antisymmetric. The last diagram represents a
completely antisymmetric tensor.

The next step of Young's procedure is to fill in the tables with the
indices. In order to avoid the repetitions, the following rule is
used:
\begin{rul}
  In each row and each column of Young's table, the positions of the
  indices, i.e., the numbers $i_1,i_2\cdots,i_p$, are inserted in the
  increasing order. The symmetrization operators correspond to the
  rows of the table.  Since different rows contain no common indices,
  the corresponding symmetrization operators commute.  Different
  antisymmetrization operators correspond to the columns of the
  table. They also commute with each other and can be taken in an
  arbitrary order. The symmetrization and antisymmetrization operators
  act on the same indices, hence they do not commute and thus must be
  taken in some fixed order.
\end{rul}
Thus, due to graphical representation (\ref{irr3}) of the permutation
group $S_4$, we have the following parts of a generic three
dimensional tensor $T_{\i\j\k\l}$
\begin{eqnarray}\label{tab1}
\hspace{-20pt}  \Yvcentermath1\young(\i)\otimes\young(\j)\otimes\young(\k)\otimes
  \young(\l)&=&\Yvcentermath1 \a\,\young({\i}{\j}{\k}{\l})
  \Yvcentermath1+\b\Bigg(\young({\i}{\j}{\k},{\l})+ 
  \young({\i}{\j}{\l},{\k})+ \young({\i}{\k}{\l},{\j})\Bigg)
  \Yvcentermath1+\nonumber\\
  &&\Yvcentermath1\g\Bigg( \,\young({\i}{\j},{\k}{\l})+ 
  \young({\i}{\k},{\j}{\l})\,\Bigg)+\Yvcentermath1 \d\Bigg(\, \young({\i}{\j},{\k},{\l})+ 
  \young({\i}{\k},{\j},{\l})+\young({\i}{\l},{\j},{\k})\,\Bigg)
  +\varepsilon\,\young({\i},{\j},{\k},{\l})\,.
\end{eqnarray}
\begin{rul}
  The coefficients $\a,\b,\g,\d,\varepsilon$ are determined by the
  combinatorial formula, see Hamermesh (1989)\cite{hamermesh}.  They
  can be calculated using the {\it Mathematica} package {\it
    'Combinatorics'}. The first and the last coefficients are
  especially simple. For an $n$-order tensor,
\begin{equation}\label{coeff}
\a=\varepsilon=\frac 1{n!}\,.
\end{equation}
\end{rul}
For a generic 4th rank tensor, all these diagrams are relevant.  In an
$n$-dimensional space, it has $n^4$ independent components, which are
distributed between the diagrams (\ref{tab1}). The explicit form of
the corresponding terms can be found in Wade (1941)\cite{wade}.

\end{document}